\newif\ifllncs\llncsfalse
\newif\ifanon\anonfalse

\ifllncs
\documentclass{llncs}
\else
\documentclass[11pt,letter]{article}
\fi 

\usepackage{breakcites}
\usepackage[utf8]{inputenc}
\usepackage[T1]{fontenc}
\ifllncs
\else
\usepackage{times}
\usepackage{fullpage}
\fi 

\usepackage{mdframed}
\usepackage{amsmath}
\usepackage{relsize}
\usepackage{amsfonts}
\usepackage{amssymb}
\usepackage[dvipsnames]{xcolor}
\usepackage{graphicx}
\usepackage{braket}
\usepackage{url}
\usepackage{bm}
\usepackage{tikz}
\usetikzlibrary{shapes,arrows,positioning,calc}
\usepackage[mathscr]{euscript}
\allowdisplaybreaks

\usepackage{soul}
\usepackage{multirow}
\definecolor{DarkBlue}{RGB}{0,0,150}
\definecolor{NotSoDarkBlue}{RGB}{15,15,210}
\definecolor{DarkRed}{RGB}{150,0,0}
\definecolor{DarkGreen}{RGB}{0,100,0}
\usepackage[pdfstartview=FitH,pdfpagemode=UseNone,colorlinks,linkcolor=DarkRed,filecolor=blue,citecolor=DarkRed,urlcolor=DarkRed,pagebackref,breaklinks]{hyperref}
\usepackage[ruled, algosection]{algorithm2e}
\usepackage{cleveref}

\usepackage{orcidlink}

\usetikzlibrary{quantikz}

\newcommand{\CC}{\mathbb{C}}

\newcommand{\one}{\mathbf{1}}
\newcommand{\poly}{\mathsf{poly}}

\newcommand{\norm}[1]{\left\| {#1} \right\|}

\renewcommand{\vec}[1]{\mathbf{#1}}

\usepackage{amsthm}
\newtheorem{theorem}{Theorem}
\newtheorem{maintheorem}{Main Theorem}

\newtheorem{lemma}[theorem]{Lemma}
\newtheorem{corollary}[theorem]{Corollary}
\newtheorem{definition}[theorem]{Definition}
\newtheorem{remark}{Remark}

\newtheorem*{theorem*}{Theorem}
\numberwithin{theorem}{section}
\numberwithin{conjecture}{section}
\numberwithin{problem}{section}
 \newtheorem{construction}{Construction}

\newmdtheoremenv[backgroundcolor=gray!10,
                 linewidth=0pt,
                 innerleftmargin=16pt,
                 innerrightmargin=16pt,
                 innertopmargin=6pt,
                 innerbottommargin=6pt,
            splitbottomskip=4pt]{protocol}[prot]{Game}

\newcommand{\bit}{\{0,1\}}
\newcommand\algo{\mathcal}

\newcommand{\reg}[1]{{\color{gray}\mathsf{#1}}}
\newcommand{\linear}{\mathrm{L}}

\newcommand{\negl}{\mathsf{negl}}
\newcommand{\ketbra}[2]{\left|#1\right\rangle\!\!\left\langle #2\right|}
\newcommand{\Tr}[1]{\mathrm{Tr}\left[#1 \right]}  
\newcommand{\ot}{\otimes}
\newcommand\id{\mathbb{I}}

\newif\ifnotes
\notesfalse

\title{Efficient Unclonable Encryption from Pauli Eigenstates}
\ifanon
    \author{Anonymous Submission}
\else
    \author{Seyoon Ragavan\\MIT \\ \url{sragavan@mit.edu}}
    \date{\today}
\fi

\newcommand{\F}{\mathbb{F}}

\begin{document}
\maketitle

\begin{abstract}
We give, to our knowledge, the first plain-model, one-time information-theoretically secure, efficient unclonable encryption scheme for one classical bit. Previous work by Bhattacharyya and Culf (Nature Physics, 2026) and Bhattacharyya, Broadbent, and Culf (arXiv:2603.08916) either only showed $1/\poly(\lambda)$ security loss or required inefficient encryption/decryption operations. We avoid both of these caveats; in doing so, we obtain (to our knowledge) the first plain-model construction of many-time secure $1 \to 2$ unclonable encryption for arbitrary polynomial-length messages, assuming the existence of pseudorandom function-like states (Bartusek and Goldin, arXiv:2605.27647).

The key is a uniformly random non-identity phase-free Pauli on $n$ qubits, and bit $a$ is encrypted as a random $(-1)^a$ eigenstate of that Pauli. Encryption and decryption use $O(n)$ single-qubit operations and $O(n)$ time classical computation; key generation uses only $O(n)$ time classical computation. The scheme is exponentially secure; we prove that the probability that both receivers recover the bit is at most $\frac{1}{2}+\frac{1}{2}\sqrt{{2^n}/({4^n-1})} = \frac{1}{2} + O\left(2^{-n/2}\right).$ By a lower bound due to Broadbent, Culf, and Rochette, this is the best probability bound achievable with $n$-qubit ciphertexts (up to the constant hidden in the $O(\cdot)$).

The main conceptual idea is to leverage, in a precise spectral sense, the balanced commutation-anticommutation structure of the Pauli group. The proof is intricate but completely elementary and makes use of standard spectral bound techniques. The main technical workhorse is a standalone linear-algebraic lemma which we present in its own section: informally, it relates the positivity of two different operators, each capturing the intuition that if the two receivers can individually decrypt unusually often then they must also disagree often.

GPT-5.6 Sol Ultra found this proof in an extended conversation with the author and drafted a preliminary version of this paper. The author is fully accountable for the correctness of this paper. 
\end{abstract}

\section{Introduction}

Unclonable encryption~\cite{Gottesman2003,BroadbentLord2020} asks a quantum ciphertext sent by Alice to enforce a temporal access
rule: before learning the secret key, an adversary (``the cloner'') may process and split the
ciphertext arbitrarily; after the key is revealed, two separated receivers (``Bob'' and ``Charlie'')
should not both be able to recover the plaintext. In this setting, the only honest party is Alice; the cloner, Bob, and Charlie are all adversarial. The baseline success
probability for a one-bit message is $1/2$, since Bob and Charlie may agree in
advance on a uniformly random bit. Alternatively, the cloner could forward Alice's ciphertext to Bob, who can now decrypt and leave Charlie to guess randomly. The goal is to prevent the adversaries from improving non-negligibly over these baseline strategies.

The notion originates in the study of quantum authentication and unclonable
encryption~\cite{Gottesman2003,BroadbentLord2020} and has subsequently been investigated in
oracle, computational, and information-theoretic settings
\cite{BroadbentLord2020,MajenzSchaffnerTahmasbi2021,AnanthEtAl2022}.
It is closely related to monogamy-of-entanglement games
\cite{TomamichelEtAl2013} and, more broadly, to cloning games
\cite{DBLP:conf/crypto/AnanthKL22,PorembaRagavanVaikuntanathan2026}.  Recent work has made tremendous progress on this deceptively simple problem, but to our knowledge, all previous results inherited at least one caveat:
\begin{itemize}
    \item A line of works by Ananth et al. solved this problem assuming either a classical random oracle with quantum query access~\cite{AnanthEtAl2022} or quantum decryption keys~\cite{DBLP:conf/innovations/AnanthKY25}.
    \item Bhattacharyya and Culf developed a decoupling-based approach~\cite{BhattacharyyaCulf2025} that worked in the plain model but incurred a security loss that was inverse-polynomial in $n$.
    \item Later, Bhattacharyya, Broadbent, and Culf~\cite{BhattacharyyaBroadbentCulf2026} used a Haar-based approach in the plain model that attained negligible security loss but required inefficient encryption and decryption operations.
    \item Bartusek and Goldin~\cite{DBLP:journals/corr/abs-2603-11437} solved this problem in the Haar random-oracle model.
\end{itemize}
In this paper we eliminate all of these caveats, showing that one-time, information-theoretically secure, efficient unclonable encryption exists. Our unclonable encryption scheme is as follows. Let
\[
  \mathcal P_n=\{I,X,Y,Z\}^{\ot n}
\]
denote one Hermitian, phase-free representative of every $n$-qubit Pauli
class.  The key is sampled uniformly\footnote{The Pauli $I^{\ot n}$ is excluded from the key space to ensure that the key has both $-1$ and $+1$ eigenspaces. Strictly speaking, this restriction is not necessary; key generation could sample from all of $\mathcal{P}_n$, and in the extremely unlikely event that the key is $I^{\ot n}$, Alice could just ``give up'' and send her bit in the clear. This will only incur a security loss of $O(4^{-n})$.} from
$\Theta_n=\mathcal P_n\setminus\{I^{\ot n}\}$.  To encrypt a bit $a$, we send
a random $(-1)^a$ eigenstate of the key Pauli, or equivalently the maximally mixed state on the $(-1)^a$ eigenspace of the key Pauli.  The
legitimate decryptor measures that Pauli.  The construction is thus finite,
explicit, and efficient.
This construction bears some similarity to one proposed by Botteron et al.~\cite{botteron2026towards}, as we will discuss at the end of this introduction.

Our main theorem is stated below.
The main conceptual idea behind its proof is to leverage the balanced commutation-anticommutation structure of the Pauli group, in a very precise spectral sense: if we encode this structure in a square matrix with $\pm 1$ entries, the resulting matrix will have low spectral norm.
The security proof is intricate but completely elementary and makes use of standard spectral bound techniques.
The main technical workhorse is a standalone linear-algebraic lemma which we present in Section~\ref{sec:separator}. (We defer it to after our main security proof to better motivate its statement.)

\begin{maintheorem}\label{thm:main-intro}
For every $n\geq1$, the Pauli construction is perfectly correct and, against
every information-theoretic splitting attack,
\[
  \Pr[\textnormal{Bob and Charlie decrypt correctly}]
  \leq \frac12+\frac12\sqrt{\frac{2^n}{4^n-1}}.
\]
In particular, the cloning advantage over the trivial $1/2$ strategy
is $O(2^{-n/2})$.
Encryption and
decryption use $O(n)$ single-qubit operations and $O(n)$ time classical computation; key generation uses only $O(n)$ time classical computation.
\end{maintheorem}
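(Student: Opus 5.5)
The proof has two parts: correctness and efficiency, which are routine, and the security bound, which carries the content. For correctness, the honest receiver measures the key Pauli $P\in\Theta_n$ on the maximally mixed state of its $(-1)^a$ eigenspace and obtains $a$ with certainty. For efficiency, key generation samples $2n$ uniform bits and rejects the all-zero string, which costs expected $O(1)$ trials of $O(n)$ work. To encrypt, sample a uniform $(-1)^a$ eigenstate of $P$: on each non-identity position pick a uniformly random eigenbasis state of the local Pauli subject to the product of the signs being $(-1)^a$, and on identity positions a uniform computational basis state. Decryption measures each non-identity position in the local eigenbasis and outputs the parity. All of this is $O(n)$ single-qubit operations plus $O(n)$ classical computation.

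For security, I would first purify. Let $N=4^n-1$ and $\Pi_P^a=(I+(-1)^aP)/2$. An arbitrary splitting attack is a channel $\Phi$ followed by key-dependent binary measurements $\{B_P^a\}$ and $\{C_P^a\}$. Feeding half of a maximally entangled state $\ket{\Omega}_{RA}$ into $\Phi$ gives a state $\ket{\psi}_{RBCE}$ whose reduced state on $R$ is maximally mixed. The success probability then becomes
\[
  \frac1N\sum_{P\in\Theta_n}\Big\langle\psi\Big|\sum_a \Pi_P^a\otimes B_P^a\otimes C_P^a\Big|\psi\Big\rangle ,
\]
with $\Pi_P^a$ acting on $R$ transposed, which is again a signed Pauli. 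Next write $B_P=B_P^0-B_P^1$ and $C_P=C_P^0-C_P^1$ and expand. Using $\sum_a\Pi_P^a=I$ and $\sum_a(-1)^a\Pi_P^a=P$, the success probability equals
\[
  \frac14+\frac14\langle B_P\otimes C_P\rangle_{\mathrm{avg}}+\frac14\big\langle P\otimes(B_P+C_P)\big\rangle_{\mathrm{avg}} .
\]
The middle term measures how well Bob and Charlie agree, and the last term measures how well each decrypts individually. To reach the target $\frac12+\frac12\sqrt{2^n/N}$, I must show that these two terms cannot both be close to $1$ at once.

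The spectral input is the observation that, because the $R$-marginal of $\ket\psi$ is maximally mixed, the vectors $v_P=(P\otimes I)\ket\psi$ are orthonormal:
\[
  \langle v_P,v_Q\rangle=\Tr{PQ}/2^n=\delta_{PQ}.
\]
Their commutation structure is governed by the $\pm1$ matrix $S_{PQ}=(-1)^{[P,Q]\neq0}$ on $\Theta_n$. Since $S$ comes from the symplectic (Hadamard-type) character table with the identity row removed, it has operator norm $O(2^n)$, even though it has dimension $N\approx4^n$. Heuristically, the individual-decryption term is $\frac1N\sum_P\langle v_P,(I\otimes B_P)\psi\rangle$, and Bob can make it large. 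For Charlie to also make it large, the same $\ket\psi$ would have to be correlated with many mutually anticommuting $P$'s simultaneously on both sides. The quantity that couples Bob and Charlie is $\frac1N\sum_{P,Q}S_{PQ}\langle\cdot\rangle$, which Cauchy--Schwarz bounds by $\|S\|/N\approx 2^n/4^n$. Taking a square root then gives the $\sqrt{2^n/N}$ loss.

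The main obstacle is turning this heuristic into an exact operator inequality, i.e.\ relating the positivity of ``agreement'' operators to the positivity of ``disagreement'' operators. I expect this is precisely the role of the separator lemma in Section~\ref{sec:separator}. My first attempt would be the following. Set $\beta=\langle P\otimes(B_P+C_P)\rangle_{\mathrm{avg}}/2$. Show that $\frac{1-\langle B_P\otimes C_P\rangle_{\mathrm{avg}}}{2}\ge \beta^2-\sqrt{2^n/N}$-type slack is impossible to beat. Concretely, I would derive a quadratic constraint of the form $(\text{success}-\tfrac12)^2\le\tfrac14\cdot\tfrac{2^n}{N}$, applying Cauchy--Schwarz to $\sum_P \langle v_P,(I\otimes B_P\otimes C_P)v_P\rangle$ against the Gram structure of $S$ and optimizing the free weights. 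Rearranging that constraint gives exactly the stated bound.
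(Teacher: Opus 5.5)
There is a genuine gap. Your reduction and decomposition are correct and coincide with the paper's. Your success expression is exactly $\langle\Xi\rangle=\frac12+\frac14\langle S_B+S_C\rangle-\frac12\langle D\rangle$, where $S_B=\frac1M\sum_\theta Q_\theta^\top B_\theta$, $S_C$ is analogous, and $D=\frac1M\sum_\theta\frac{\id-B_\theta C_\theta}{2}$. The spectral input $\norm{\Gamma}=2^n$ is also the right one. Correctness and efficiency are fine. But the security argument itself is missing, in two places.

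First, your ``coupling quantity'' has to be pinned down as the state-independent operator bound $\norm{S_BS_C}\le\norm{\Gamma}/M=:\epsilon$. The paper gets this by commuting $Q_\phi^\top$ past $Q_\theta^\top$, which gives $S_BS_C=\frac1{M^2}\sum_{\theta,\phi}\Gamma_{\phi,\theta}(C_\phi Q_\phi^\top)(Q_\theta^\top B_\theta)$. This factors as a block row of unitaries times $\Gamma\otimes\id$ times a block column of unitaries.

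Second, and more seriously, ``take a square root'' does not turn this into the theorem, because Bob's and Charlie's successes can be correlated. If the cloner forwards the ciphertext to Bob or to Charlie according to a fair coin, then $\langle S_B\rangle=\langle S_C\rangle=\frac12$, so no bound of the form $\langle S_B\rangle\langle S_C\rangle\lesssim\epsilon$ can hold. What must be shown is the linear inequality $\frac{S_B+S_C}{2}-D\preceq\sqrt\epsilon\,\id$. This is tight simultaneously for a whole family of trivial strategies: a shared random bit, forwarding to Bob, and coin-flip forwarding. So ``the two terms cannot both be close to $1$'' is far too weak, and the argument must use the constraints $D\succeq\pm\frac12(S_B-S_C)$.

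Your concrete plan does not supply this. The quadratic constraint $(\omega-\frac12)^2\le\frac14\cdot\frac{2^n}{N}$ is just the target squared. The object you propose to apply Cauchy--Schwarz to, $\sum_P\langle v_P,(\id\otimes B_P\otimes C_P)v_P\rangle$, collapses to $\sum_P\langle\psi|B_PC_P|\psi\rangle$, because $P$ commutes with $B_P,C_P$ and $P^2=\id$. It has no cross-key terms, so the sign matrix has nothing to act on. The orthonormality of the $v_P$ is also not needed, since the paper's bound is an operator inequality on the whole space.

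The missing mechanism in the paper has two parts.
(i) Expand $(D+\delta\id)^{-1}$ as a power series in $\id-2D=\frac1M\sum_\theta B_\theta C_\theta$. Since Bob's and Charlie's operators commute, $S_B(\id-2D)^\ell S_C=\frac1{M^\ell}\sum_{\theta_1,\ldots,\theta_\ell}C_{\theta_1}\cdots C_{\theta_\ell}\,S_BS_C\,B_{\theta_1}\cdots B_{\theta_\ell}$ has norm at most $\epsilon$. Summing the series gives $\norm{S_B(D+\delta\id)^{-1}S_C}\le\epsilon/\delta$.
(ii) Apply Theorem~\ref{thm:separator}. Symmetrizing the bound from (i) gives $(S_B+S_C)R(S_B+S_C)\preceq(S_B-S_C)R(S_B-S_C)+4\gamma\id$, with $R=(D+\delta\id)^{-1}$. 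A Schur complement of the PSD block matrix coming from $D\succeq\pm\frac12(S_B-S_C)$ gives $\frac14(S_B-S_C)R(S_B-S_C)\preceq D$. Combining these, replacing $R$ by the smaller $(D+\max(\delta,\gamma)\id)^{-1}$, and applying the Schur-complement criterion in reverse yields $\frac{S_B+S_C}{2}\preceq D+\max(\delta,\gamma)\id$. Taking $\delta=\gamma=\sqrt\epsilon$ finishes.

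Without some substitute for (i) and (ii), your argument stops at the heuristic.
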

\noindent
Some remarks are in order.
Firstly, the $O(2^{-n/2})$ cloning advantage bound is the best achievable for any perfectly correct unclonable encryption scheme with $n$-qubit ciphertexts~\cite[Corollary 22]{broadbent2025optimaluntelegraphableencryptionimplications}.
Secondly, our proof of Main Theorem~\ref{thm:main-intro} also shows that we can take the key space to be any collection of non-identity Paulis of size $2^{n+\omega(\log n)}$ (rather than all $4^n-1$ of them), and still incur a security loss of $\negl(n)$. See Theorem~\ref{thm:security} for a general formulation of our security bound.
Thirdly, our result can be readily upgraded via standard Haar invariance techniques~\cite{MajenzSchaffnerTahmasbi2021,broadbent2025optimaluntelegraphableencryptionimplications,PorembaRagavanVaikuntanathan2026} to give an alternate proof that the Haar scheme of~\cite{BhattacharyyaBroadbentCulf2026} is secure (albeit inefficient), while also making a minor improvement to their proven security loss from $O(2^{-n/8})$ to the optimal $O(2^{-n/2})$.
We present this in Section~\ref{sec:haar} for completeness.

Finally, our result can be upgraded to satisfy many-time security and encrypt arbitrary polynomial-length messages at the necessary expense of making mild computational assumptions, via the compiler of Bartusek and Goldin~\cite{bartusek2026noteboostinguncloneableencryption} building on~\cite{DBLP:conf/tcc/AnanthK21,hiroka2024robust}.
To our knowledge, our work is the first to obtain the following result, due to the aforementioned caveats with previous work.
\begin{corollary}[Via~\cite{bartusek2026noteboostinguncloneableencryption}]\label{cor:bg26}
    Assuming the existence of pseudorandom function-like states and computationally bounded adversaries, there exists many-time secure $1 \to 2$ unclonable encryption for messages of arbitrary polynomial length in the plain model.
\end{corollary}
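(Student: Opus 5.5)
The statement is Corollary~\ref{cor:bg26}, which we prove by plugging the one-time scheme of Main Theorem~\ref{thm:main-intro} into the compiler of Bartusek and Goldin~\cite{bartusek2026noteboostinguncloneableencryption}. Almost nothing new is proved here. The work is to check that our scheme meets every hypothesis of their compiler and to match the definitions exactly.

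\textbf{Step 1: state the compiler's input requirements precisely.} As we understand it, the compiler of~\cite{bartusek2026noteboostinguncloneableencryption}, building on~\cite{DBLP:conf/tcc/AnanthK21,hiroka2024robust}, takes two ingredients. The first is a one-time, single-bit unclonable encryption scheme in the plain model whose cloning advantage over $1/2$ is negligible. The second is pseudorandom function-like states. We expect the compiler to need efficient key generation, encryption and decryption, since the compiled scheme runs the base scheme inside a computational reduction. It may also need a classical key, and possibly a key that is uniformly random or samplable from a seed.

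\textbf{Step 2: verify each requirement from Main Theorem~\ref{thm:main-intro}.}
\begin{itemize}
\item \emph{Correctness.} Perfect correctness is immediate, because measuring the key Pauli on a $(-1)^a$ eigenstate returns $a$ with certainty.
\item \emph{Security.} Main Theorem~\ref{thm:main-intro} bounds the cloning advantage by $\tfrac12\sqrt{2^n/(4^n-1)}=O(2^{-n/2})$. Taking $n=\lambda$ makes this negligible, and the bound holds against unbounded splitting adversaries, so in particular against computationally bounded ones.
\item \emph{Classical key.} The key is a classical string of $2n$ bits describing a non-identity element of $\mathcal P_n$. It can be sampled exactly in $O(n)$ time by rejection sampling, with expected $O(1)$ trials. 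Alternatively, following the footnote, we sample from all of $\mathcal P_n$ and give up on $I^{\ot n}$, at an extra $O(4^{-n})$ loss.
\item \emph{Efficiency.} Encryption and decryption use $O(n)$ single-qubit operations, which we take from the theorem.
\end{itemize}
Hence the scheme is an efficient, plain-model, one-time, information-theoretically secure unclonable bit encryption with negligible security loss.

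\textbf{Step 3: apply the compiler.} We invoke the main theorem of~\cite{bartusek2026noteboostinguncloneableencryption}. It yields many-time secure $1\to2$ unclonable encryption for messages of arbitrary polynomial length against computationally bounded adversaries, assuming only pseudorandom function-like states and with no oracle. The plain-model claim follows because neither our scheme nor the compiler uses a random or Haar oracle.

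\textbf{Main obstacle: definitional mismatch.} The delicate point is that the compiler's notion of one-time security may differ from the splitting game analysed here. It might, for example, require indistinguishability-style security rather than search security for a uniformly random bit. For a single bit, these are the same quantity: the winning probability for a uniform bit equals $\tfrac12$ plus half the advantage of the combined distinguisher. If the compiler instead needs a stronger variant, we would check whether Theorem~\ref{thm:security} already covers it; for instance, the compiler may require that the adversary is given auxiliary classical information, or that the key is published to both receivers.
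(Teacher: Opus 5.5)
Your proposal is correct and follows the same route as the paper. The paper proves the corollary simply by feeding the Pauli scheme of Main Theorem~\ref{thm:main-intro} into the Bartusek--Goldin compiler: that scheme is plain-model, efficient, classically keyed, and has negligible one-time cloning advantage, and prior plain-model schemes lacked the combination of efficiency and negligible loss. You also observe that for a single bit the indistinguishability-style game coincides with the uniform-bit game up to relabeling, with the case $m_0=m_1$ trivially bounded by $1/2$; this is correct and a reasonable sanity check on a definitional point the paper leaves implicit.
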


\paragraph{Comparison with the work of~\cite{botteron2026towards}.}
A very similar construction to ours was considered by Botteron et al.~\cite{botteron2026towards}, but instead of using the entire Pauli group (excluding the identity), they restricted the key space to a pairwise anticommuting subset of the Pauli group.
This immediately limits the construction to security of the form $1/2 + 1/\poly(n)$ because the number of possible keys is at most polynomial, and whether their construction attains this security bound was posed by the authors as a conjecture.
Despite having a similar encryption scheme, our techniques do not appear to imply any useful bounds on the security of their scheme.

The reason is that an essential component of our analysis is leveraging the fact that the Paulis are very balanced between commutation and anticommutation (see Section~\ref{sec:commoverview} for more discussion of this).
Extending our results to cover the construction proposed by~\cite{botteron2026towards} (albeit with a necessarily weaker security bound) is an interesting direction for future work.

\paragraph{Concurrent and independent work.}
Concurrently and independently to our work, Ananth and Sahai~\cite{ananth2026unconditionalunclonableencryption} found and posted the same result.
Their construction and proof are essentially identical to ours, modulo minor differences in presentation and structure.
The only substantive difference is that we give a different proof of the lemma that $\norm{S_BS_C} \leq \epsilon$ (Lemma~\ref{lem:arbitrary-word}, compare with~\cite[Proposition 3.5]{ananth2026unconditionalunclonableencryption}).
Their proof uses a conditional overlap lemma and the pairwise orthogonality of the $n$-qubit Paulis (with respect to the Hilbert-Schmidt inner product); ours uses the Paulis' commutation-anticommutation structure.

\paragraph{Lean 4 formalization.}
To help with verifiability, we release Lean 4 code for our main proofs.
In more detail, the Lean repository proves all correctness and security claims in Main Theorem~\ref{thm:main-intro} and Theorem~\ref{thm:haarmain}, but it does not formally show the efficiency claim in Main Theorem~\ref{thm:main-intro}.
It also provides a modular interface defining unclonable encryption for a bit, allowing for other users to provide their own constructions and security proofs in Lean 4.
It also does not prove Corollary~\ref{cor:bg26} since this would also require formalizing the results of~\cite{bartusek2026noteboostinguncloneableencryption}.
The Lean code was built using Codex GPT-5.6 Sol and can be found at \href{https://github.com/sragavan99/lean-unclonable-enc-paulis}{this GitHub repository.}

\paragraph{Notation.}
All Hilbert spaces are finite-dimensional and complex.  For a linear operator
$A$, write $A^*$ for its adjoint and $\norm{A} = \sqrt{\lambda_{\max}(A^*A)}$ for its spectral norm. For
self-adjoint operators $A,B$, the notation $A\preceq B$ means that $B-A$ is
positive semidefinite.
For a Hermitian $A$ and a state $\rho$, we use $\langle A \rangle_\rho$ to denote the expectation value $\Tr{A\rho}$.
We freely suppress identity operators and tensor
symbols when their registers are clear (see Remark~\ref{remark:registers} for more detail on this).  Density operators have trace one,
and quantum channels are completely positive and trace preserving.  We use
standard facts about Choi representations, Naimark dilation, and finite
dimensional operator theory; see, for example,~\cite{NielsenChuang11,Watrous2018}.

\paragraph{Statement on AI use.}
The result was discovered by OpenAI's GPT-5.6 Sol Ultra in conversation with the author.
The author asked the agents to brainstorm and pursue an array of approaches to proving the theorem.
The agents eventually converged on the Pauli bit as a promising candidate encryption scheme.
In the first few rounds, the agents managed to prove bounds of the form $c+o(1)$ or $c+\negl(n)$ for slowly improving constants $c > 1/2$, for the most part using sum-of-squares bounds based on the NPA hierarchy.
The author explicitly pushed the agents to find a proof using simple spectral bounds and provided feedback on their progress.
In the sixth round, the agents found a complete solution.
The agents drafted the first version of this paper, after which the author significantly cleaned up and reorganized the formal presentation and added informal exposition explaining the ideas.
GPT-5.6 Sol and Claude Fable 5 were both used for proofreading.
Codex GPT-5.6 Sol was used to produce Lean 4 code for our proofs.

\paragraph{Acknowledgements.}
The author thanks Alexander Poremba and Vinod Vaikuntanathan for introducing him to the problem of unclonable encryption and for a wonderful collaboration on this front.
The author additionally thanks Vinod Vaikuntanathan and Aparna Gupte for helpful feedback on drafts of this paper, and Prabhanjan Ananth, Eli Goldin, Yao-Ting Lin, and Amit Sahai for helpful discussions.
The author's access to ChatGPT Pro and Claude Max were respectively supported by the UK AISI alignment project, and Jane Street.

\section{Technical Overview}\label{sec:proofoverview}

The security proof for the Pauli unclonable encryption scheme begins in Section~\ref{sec:reduction} by applying standard transformations~\cite{BroadbentLord2020} to boil proving security down to the following task: let $\Theta$ denote the key space of all non-identity Paulis, and for each $\theta$ let $P_\theta, B_\theta, C_\theta$ respectively denote the corresponding Pauli, Bob's reflection, and Charlie's reflection.
By ``Bob's reflection'', we mean that Bob outputs his guess given the secret key $\theta$ by applying the projective measurement $\{(\id + B_\theta)/2, (\id-B_\theta)/2\}$.
We understand $P_\theta, B_\theta, C_\theta$ to be operators in the same Hilbert space, with the constraint that $P_\theta, B_\phi, C_\psi$ must pairwise commute for any $\theta, \phi, \psi \in \Theta$ (which is enforced by the spatial separation of the parties).
In these terms, the goal is simply to show that:
\begin{equation}\label{eq:overviewtarget}
\Xi := \frac{1}{|\Theta|} \sum_{\theta \in \Theta} \left(\frac{\id + P_{\theta, \reg{A}}^\top B_{\theta, \reg{B}}}{2}\right)\left(\frac{\id + P_{\theta, \reg{A}}^\top C_{\theta, \reg{C}}}{2}\right) \preceq \left(\frac{1}{2} + \negl(n)\right)\id.
\end{equation}
The subscripts $\reg{A}, \reg{B}, \reg{C}$ above indicate which register each of these operators acts on.
Going forward, we will suppress these subscripts and understand $P_\theta, B_\theta, C_\theta$ as operators on the combined Hilbert space, carrying forward the constraint that any two of $P_\theta, B_\phi, C_\psi$ must commute (see Remark~\ref{remark:registers} for more on this).
The intuition is that the $\frac{\id + P_\theta^\top B_\theta}{2}$ projector checks whether Bob guesses correctly and the $\frac{\id + P_\theta^\top C_\theta}{2}$ projector checks whether Charlie guesses correctly.
The adversaries' strategy can be encoded with the existing unitaries $\{B_\theta, C_\theta\}$ and a state $\rho$ on this same Hilbert space.
For brevity, write $\Pi_\theta = \left(\frac{\id + P_\theta^\top B_\theta}{2}\right)\left(\frac{\id + P_\theta^\top C_\theta}{2}\right)$.

It might be tempting to try and prove the stronger two-sided bound that $\norm{\Xi - \id/2} \leq \negl(n)$, but this is false!
If Bob always outputs 0 and Charlie always outputs 1, then their winning probability is 0 so $\langle \Xi \rangle_\rho$ will be 0.
With that in mind, we pursue the one-sided objective stated in~\eqref{eq:overviewtarget}.

\subsection{Operators and Interpretations (Section~\ref{sec:newops})}\label{sec:overviewnewops}

All techniques discussed in this overview can be interpreted as proving an inequality between three operators:
\begin{itemize}
  \item $S_B = \frac{1}{|\Theta|} \sum_\theta P_\theta^\top B_\theta$: this is an average of reflections that check whether Bob's guess is correct.
  If $p_B$ is the probability that Bob guesses correctly (averaged across $\theta$), we have $\langle S_B \rangle_\rho = 2p_B - 1$.

  \item $S_C = \frac{1}{|\Theta|} \sum_\theta P_\theta^\top C_\theta$: this is defined analogously for Charlie.
  Let $p_C$ denote the probability that Charlie guesses correctly.

  \item $D = \frac{1}{|\Theta|} \sum_\theta \frac{\id - B_\theta C_\theta}{2}$: this is an average of projections that check whether Bob and Charlie's guesses disagree.
  If $p_{\neq}$ is the probability that they disagree, we have $\langle D \rangle_\rho = p_{\neq}$.
\end{itemize}
In these terms, it is straightforward to see that:
\begin{equation}\label{eq:xialtoverview}
\Xi = \frac{1}{2} \id + \frac{1}{4}(S_B+S_C) - \frac{1}{2}D.
\end{equation}
This can either be checked with direct algebraic manipulation, or intuited by noting that $\frac{S_B+S_C}{4}$ is capturing the bias beyond $1/2$ of the sums of Bob's and Charlie's individual guessing probabilities.
This overcounts cases where exactly one of Bob and Charlie guesses correctly, which we account for by subtracting off $\frac{1}{2}D$.
In general, the above operational interpretations will be extremely useful for understanding the techniques which we present.
At this point, one can also intuit the following bounds (which we prove rigorously in the body).
The limitation of these bounds is that they can all be deduced without considering how the operators for different keys $\theta \neq \theta'$ interact with each other.
\begin{align}
  0 &\preceq D \preceq \id \qquad\text{(``the probability of disagreement is always between 0 and 1'');}\label{item:Dproj} \\
  D &\succeq \pm \frac{S_B-S_C}{2} \qquad\text{(``if Bob succeeds and Charlie does not, then their answers disagree'').} \label{item:diff} \\
\end{align}

\subsection{Leveraging the Pauli Commutation-Anticommutation Structure (Lemma~\ref{lem:arbitrary-word})}\label{sec:commoverview}
The main conceptual ingredient of our security analysis is to exploit the balanced commutation-anticommutation behavior of the Pauli group.
In more detail, let $\Gamma$ be the square matrix of dimension $4^n-1$ with $\pm 1$ entries that encodes the Pauli commutation structure i.e., we have $P_\theta P_\phi = \Gamma_{\theta, \phi} P_\phi P_\theta$ for all $\phi, \theta$.
A key component of our bound will be the straightforward estimate that $\norm{\Gamma} \leq 2^n$ (Lemma~\ref{lem:symplectic-matrix}); this essentially follows from the fact that $\Gamma$ is a submatrix of a scaled-up and column-permuted Hadamard matrix.
So a feature of the all-Pauli scheme that works in our favor is that Paulis have a commutation-anticommutation pattern that has nice spectral behavior.

We view Lemma~\ref{lem:symplectic-matrix} discussed above as a spectral formulation of the notion that $\Gamma$ is ``balanced.''\footnote{As discussed at the end of the introduction, this is quite different from the Pauli construction considered by~\cite{botteron2026towards}, where the proposed scheme worked with a pairwise anticommuting set of Paulis (so $\Gamma$ would have 1's on the diagonal and $-1$'s off the diagonal and have exponentially lower dimension).}
An example of an unbalanced $\Gamma$ would be the all 1's matrix, which has larger spectral norm $4^n-1$.
Using this commutation structure alone, it is straightforward to prove that:
\begin{equation}\label{eq:overviewarbitraryword}
  \norm{S_BS_C} \leq \epsilon \qquad\text{(Lemma~\ref{lem:arbitrary-word})},
\end{equation}
where $\epsilon = \norm{\Gamma}/|\Theta| = \negl(n)$.
To see why, let us expand out $S_BS_C$:
\[
    S_BS_C = \frac{1}{|\Theta|^2} \sum_{\theta, \phi} P_\theta^\top B_\theta P_\phi^\top C_\phi = \frac{1}{|\Theta|^2} \sum_{\theta, \phi} C_\phi (P_\phi P_\theta)^\top B_\theta.
\]
This is an average of $|\Theta|^2$ unitaries, so it is not yet clear why its spectral norm should be $\ll 1$.
The trick is to commute $P_\phi$ and $P_\theta$ past each other, yielding:
\[
    S_BS_C = \frac{1}{|\Theta|^2} \sum_{\theta, \phi} \underbrace{(C_\phi P_\phi^\top)}_{:= U_\phi} \Gamma_{\phi, \theta} \underbrace{(P_\theta^\top B_\theta)}_{:= V_\theta}.
\]
Now we have a \emph{signed} average of $|\Theta|^2$ unitaries $\{U_\phi V_\theta: \phi, \theta \in \Theta\}$! This is the quadratic form of $\Gamma$ applied to the matrices $\{U_\phi\}$ on the left and $\{V_\theta\}$ on the right. So the fact that $\Gamma$ has low spectral norm now forces this average to destructively interfere and give us the desired negligible bound on $\norm{S_BS_C}$.

\subsection{Introducing a Regularized Disagreement Conditioner (Corollary~\ref{cor:cross-resolvent})}\label{sec:conditioner}
Let us for the moment pretend that (a) $S_B$ and $S_C$ commute (so $S_BS_C$ is Hermitian); and (b) we have $\langle S_BS_C\rangle_\rho = \langle S_B \rangle_\rho \langle S_C \rangle_\rho$ (i.e., Bob's and Charlie's success probabilities across keys are uncorrelated in some sense).
With these assumptions,~\eqref{eq:overviewarbitraryword} would admit the very clean interpretation that
\begin{equation}\label{eq:pbpc}
  |(2p_B-1)(2p_C-1)| \leq \epsilon.
\end{equation}
In this case, the above would imply that $p_B, p_C$ cannot simultaneously be $> 1/2 + \sqrt{\epsilon}/2$ and we would immediately be done.

What goes wrong when we account for correlations between Bob's guess and Charlie's guess? In this case,~\eqref{eq:pbpc} need not be true.
Consider the following very simple strategy: the cloner flips a coin and accordingly forwards Alice's ciphertext to either Bob or Charlie.
The other receiver will guess randomly.
For this strategy, we have $p_B = p_C = 3/4$, contradicting~\eqref{eq:pbpc}.
The reason that Bob and Charlie still only win with probability $1/2$ is that the cases where Bob guesses incorrectly and Charlie guesses incorrectly are completely disjoint (i.e., there is a strong anticorrelation), so the two $1/4$'s add up to give a failure probability of $1/2$.
So we need to somehow adapt~\eqref{eq:overviewarbitraryword} to additionally penalize strategies where Bob and Charlie disagree often.

Fortunately for us, we have an operator $D$ that exactly captures the subspace where Bob and Charlie disagree.
A concrete hope would be to prove a statement of the form:
\begin{equation}\label{eq:wishful}
    \norm{S_BD^{-1}S_C} \leq \negl(n).
\end{equation}
In the above, $D^{-1}$ acts as a ``disagreement conditioner'' that amplifies subspaces where Bob and Charlie are likely to agree but not subspaces where Bob and Charlie are likely to disagree (and therefore lose).
This effectively focuses the estimate on the subspace we really need to worry about which is that where Bob and Charlie agree.

Unfortunately,~\eqref{eq:wishful} cannot possibly be true; $D$ need not be invertible, as there could be a state for which Bob and Charlie agree with probability 1.
However, we can prove something very close, namely:
\begin{equation}\label{eq:overviewresolvent}
  \norm{S_B (D+\delta \id)^{-1} S_C}\leq \epsilon/\delta \quad \forall \delta > 0 \qquad\text{(Corollary~\ref{cor:cross-resolvent})}.
\end{equation}
We think of $(D+\delta \id)^{-1}$ as a ``regularized disagreement conditioner''; the $\delta \id$ term acts as a ``regularizer'' to ensure that no direction gets blown up by more than $1/\delta$ and we accordingly pay a $1/\delta$ penalty on the RHS.
This penalty can be absorbed while retaining negligibility by setting $\delta = \sqrt{\epsilon}$.

The proof of~\eqref{eq:overviewresolvent} proceeds by taking a power series expansion of $(D+\delta \id)^{-1}$ and bounding every term that appears using~\eqref{eq:overviewarbitraryword} (Lemma~\ref{lem:arbitrary-word}).

\subsection{Linear-Algebraic Step (Section~\ref{sec:separator})}\label{sec:separatoroverview}
It turns out that our heuristic intuition about the relevance of~\eqref{eq:overviewresolvent} can actually be turned into a rigorous security proof!
Recall from~\eqref{eq:xialtoverview} that our task is to show that:
\begin{equation}\label{eq:overviewdone}
  S_B+S_C - 2D \preceq \negl(n) \cdot \id.
\end{equation}
This bears some qualitative similarity to~\eqref{eq:overviewresolvent}; both say that ``$D$ is large relative to $S_B$ and $S_C$'', or more qualitatively that if Bob and Charlie individually win unusually often then they must also disagree often.
In fact, we can formally show that~\eqref{eq:overviewresolvent} implies~\eqref{eq:overviewdone}; this is the content of Theorem~\ref{thm:separator}.

The proof uses some simple but non-obvious manipulations with $2 \times 2$ block matrices based on Schur complements; equivalently, it can be written as a matrix-valued rational sum-of-squares certificate.

\section{Preliminaries}

\subsection{Pauli Symplectic Space}

Let $V_n=\F_2^n\times\F_2^n$.
For $v=(x,z)\in V_n$, let $P_v$ be the phase-free Hermitian Pauli obtained by
taking the tensor product of $I,X,Y,Z$ prescribed by the pairs
$(x_i,z_i)$.  Equivalently, one may use the phase convention
$P_{(x,z)}=i^{|x\wedge z|}X^xZ^z$, where $|x\wedge z|$ is the integer
number of coordinates on which $x_i=z_i=1$.
Every $P_v$ is a reflection (meaning $P_v^2 = \id$).

\begin{definition}[Pauli symplectic form]
    For $v = (x, z), w = (x', z') \in V_n$, we define the symplectic form to be:
    \[
      [v,w]=\langle x, z'\rangle+\langle z, x'\rangle\in\F_2.
    \]
\end{definition}

\begin{lemma}[Pauli commutation relations]\label{lem:paulibasic}
    For any $v, w \in V_n$, we have:
    \begin{equation}\label{eq:pauli-basic}
      P_vP_w=(-1)^{[v,w]}P_wP_v,\qquad \Tr{P_vP_w}=2^n\one[v=w].
    \end{equation}
\end{lemma}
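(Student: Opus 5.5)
The plan is to reduce both identities to the single-qubit case via the tensor-product structure, and then settle the single-qubit case with a small explicit check.

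First, I will write $v=(x,z)$ and $w=(x',z')$. By definition, $P_v=\bigotimes_{i=1}^n \sigma_{(x_i,z_i)}$, where $\sigma_{(0,0)}=I$, $\sigma_{(1,0)}=X$, $\sigma_{(0,1)}=Z$ and $\sigma_{(1,1)}=Y$. The phase convention $i^{|x\wedge z|}X^xZ^z$ agrees with this, since $Y=iXZ$ on each coordinate where $x_i=z_i=1$. Because $\F_2$-valued symplectic forms add over coordinates, $[v,w]=\sum_i [(x_i,z_i),(x_i',z_i')]$ mod $2$. Products and traces of tensor products factor coordinatewise, so
\[
P_vP_w=\bigotimes_i \sigma_{v_i}\sigma_{w_i},\qquad \Tr{P_vP_w}=\prod_i \Tr{\sigma_{v_i}\sigma_{w_i}}.
\]
It therefore suffices to prove two single-qubit facts: $\sigma_a\sigma_b=(-1)^{[a,b]}\sigma_b\sigma_a$, and $\Tr{\sigma_a\sigma_b}=2\cdot\one[a=b]$. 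Multiplying the signs over coordinates gives $(-1)^{\sum_i[v_i,w_i]}=(-1)^{[v,w]}$. Multiplying the traces gives $2^n$ exactly when every coordinate agrees, and $0$ otherwise.

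For the single-qubit commutation fact, I will avoid a full table. Using $XZ=-ZX$, I can write $\sigma_a=c_aX^{x}Z^{z}$ with a scalar $c_a$. Moving $Z^{z}$ past $X^{x'}$ produces the sign $(-1)^{zx'}$, and similarly in the other order. Comparing $X^xZ^zX^{x'}Z^{z'}$ with $X^{x'}Z^{z'}X^xZ^z$ then gives the relative sign $(-1)^{xz'+zx'}$, which is exactly $(-1)^{[a,b]}$. The scalars $c_a,c_b$ cancel on both sides. This computation also works directly on $n$ qubits with $X^x=\bigotimes X^{x_i}$, which gives an alternative route.

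For the trace fact, if $a=b$ then $\sigma_a^2=I$ (each Pauli is Hermitian and unitary), so the trace is $2$. If $a\neq b$, then $\sigma_a\sigma_b$ is, up to a unit phase, a non-identity Pauli, namely the one indexed by $a+b$, since $X^xZ^zX^{x'}Z^{z'}=\pm X^{x+x'}Z^{z+z'}$. Each of $X$, $Y$, $Z$ is traceless, so the trace is $0$. The only mild care needed anywhere is the bookkeeping of the phase $i^{|x\wedge z|}$. Since it is a scalar, it cannot affect a commutation sign, and it cannot turn a zero trace into a nonzero one. I do not expect a real obstacle.
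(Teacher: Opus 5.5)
Your proof is correct and is essentially the paper's argument: the paper writes $P_v, P_w$ as phases times $X^xZ^z$ and $X^{x'}Z^{z'}$, obtains $P_vP_w \propto (-1)^{\langle x',z\rangle}X^{x+x'}Z^{z+z'}$ and $P_wP_v \propto (-1)^{\langle x,z'\rangle}X^{x+x'}Z^{z+z'}$, and reads off both identities from these (this is the direct $n$-qubit route you mention as an alternative). Your preliminary reduction to single qubits via tensor-product factorization only reorganizes the same computation.
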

\begin{proof}
    Writing $v = (x, z)$ and $w = (x', z')$, we have:
    $$P_vP_w = i^{|x \land z| + |x' \land z'|} X^x Z^z X^{x'} Z^{z'} = i^{|x \land z| + |x' \land z'|} \cdot (-1)^{\langle x', z \rangle} X^{x+x'} Z^{z+z'}.$$
    (This already implies that $\Tr{P_vP_w} = 0$ unless $(x, z) = (x', z')$, in which case $P_vP_w$ is the identity since the Paulis are reflections.)
    By symmetry, we also have:
    $$P_wP_v = i^{|x \land z| + |x' \land z'|} \cdot (-1)^{\langle x, z' \rangle} X^{x+x'} Z^{z+z'}.$$
    Comparing the above two equalities implies the conclusion.
\end{proof}

\begin{definition}[Symplectic character matrix]
    Define $\Gamma^{\mathrm{Pauli}} \in \{-1, 1\}^{(4^n-1) \times (4^n-1)}$ to be the matrix with rows and columns indexed by elements of $V_n \backslash \{(0, 0)\}$.
    The entry in row $v$ and column $w$ is $\Gamma^{\mathrm{Pauli}}_{v, w} = (-1)^{[v, w]}$.
\end{definition}

\begin{lemma}[Pauli symplectic structure is spectrally balanced]\label{lem:symplectic-matrix}
    We have $\norm{\Gamma^{\mathrm{Pauli}}} = 2^n$.
    (We remark that we only need the upper bound of $2^n$, but we prove tightness for completeness.)
\end{lemma}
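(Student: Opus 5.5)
We first pass to the full character matrix on all of $V_n$, and then remove the zero row and column. Let $H \in \{-1,1\}^{4^n \times 4^n}$ be indexed by all of $V_n$, with $H_{v,w} = (-1)^{[v,w]}$.

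The symplectic form is $[v,w] = \langle v, Jw\rangle$, where $J(x',z') = (z',x')$ swaps the two halves. This $J$ is a linear bijection of $\F_2^{2n}$. Hence $H$ is the Sylvester--Hadamard matrix $W_{v,u} = (-1)^{\langle v,u\rangle}$ of order $4^n$ with its columns permuted by $J$.

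The Walsh--Hadamard matrix satisfies $W W^\top = 4^n \id$. This follows from the character orthogonality
\[
\sum_u (-1)^{\langle v+v',u\rangle} = 4^n \one[v=v'].
\]
Permuting columns preserves this, so $H H^\top = 4^n \id$. Therefore $\norm{H} = 2^n$.

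For the upper bound, note that $\Gamma^{\mathrm{Pauli}}$ is the principal submatrix of $H$ obtained by deleting the row and column indexed by $(0,0)$. Write $\Gamma^{\mathrm{Pauli}} = R H R^\top$, where $R$ is the coordinate restriction with $\norm{R} = 1$. Submultiplicativity then gives $\norm{\Gamma^{\mathrm{Pauli}}} \le \norm{H} = 2^n$.

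For the lower bound, we need a test vector avoiding the zero coordinate with Rayleigh quotient $2^n$. Take the indicator $f$ of the Lagrangian subspace $L = \{(x,0) : x \in \F_2^n\}$ and restrict it to $L \setminus \{0\}$. Since $[v,w] = 0$ for all $v,w \in L$, every entry of $\Gamma^{\mathrm{Pauli}}$ on $(L\setminus\{0\}) \times (L\setminus\{0\})$ equals $1$. This block is the all-ones matrix of size $2^n - 1$, so
\[
\norm{\Gamma^{\mathrm{Pauli}}} \ge 2^n - 1 .
\]
That only gives $2^n - 1$, so a better vector is needed.

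To get exactly $2^n$, use $g = \one_L - \one_{L'}$ with $L' = \{(0,z)\}$, or a similar combination, so that the contribution of the zero coordinate cancels. Concretely, $H \one_L = 2^n \one_{L^\perp}$, and here $L^{\perp} = L$ under the symplectic form. So $\one_L$ is an exact eigenvector of $H$ with eigenvalue $2^n$. Similarly $H \one_{L'} = 2^n \one_{L'}$.

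Then $g = \one_L - \one_{L'}$ is an eigenvector of $H$ with eigenvalue $2^n$, and it vanishes at $(0,0)$ because the two contributions cancel there. Hence $(Hg)(0) = 2^n g(0) = 0$. Restricting $g$ to $V_n \setminus \{0\}$ therefore gives $\Gamma^{\mathrm{Pauli}} g = 2^n g$, which establishes $\norm{\Gamma^{\mathrm{Pauli}}} \ge 2^n$.

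The main thing to check carefully is the identity $H \one_L = 2^n \one_{L^\perp}$. It follows from the standard fact that a character sums to $|L|$ over $L$ if it is trivial on $L$, and to $0$ otherwise. The remaining steps are routine.
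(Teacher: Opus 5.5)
Your proof is correct. The upper bound is the same as the paper's: $\Gamma^{\mathrm{Pauli}}$ is a principal submatrix of a column-permuted Sylvester--Hadamard matrix. Your lower bound takes a genuinely different route.

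The paper writes the full matrix in block form, splitting off the $(0,0)$ row and column, and squares the identity $(2^nH)^2 = 4^n\id$. This gives $(\Gamma')^2 = 4^n\id - \one\one^\top$ for the column-permuted matrix $\Gamma'$, so any vector orthogonal to $\one$ witnesses the singular value $2^n$. That argument needs nothing beyond $H^2=\id$, and it determines every singular value: $2^n$ with multiplicity $4^n-2$, plus a single singular value equal to $1$.

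You instead exhibit an explicit eigenvector $\one_L-\one_{L'}$ of the unpermuted matrix. It is built from the two transversal Lagrangian subspaces of $X$-type and $Z$-type labels. Both indicators are $2^n$-eigenvectors of $H$ and their difference vanishes at the origin, so its restriction is a $2^n$-eigenvector of $\Gamma^{\mathrm{Pauli}}$ itself. This is more structural, and it shows $2^n$ is an actual eigenvalue rather than only a singular value. The cost is the (easy) character-sum computation over subspaces.

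Some cosmetic points:
\begin{itemize}
\item The detour through the bound $2^n-1$ and the hedge ``or a similar combination'' can be dropped.
\item The fact that justifies restricting is $g(0)=0$, which kills the deleted $(0,0)$ column. The value $(Hg)(0)$ is irrelevant, since that row is simply discarded.
\item You should note that the restricted vector is nonzero. This holds because $L\setminus\{0\}\neq\emptyset$ for $n\geq 1$.
\end{itemize}
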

\begin{proof}
    Unpacking $v = (x, z)$ and $w = (x', z')$, the corresponding entry of $\Gamma^{\mathrm{Pauli}}$ is $\Gamma^{\mathrm{Pauli}}_{(x, z); (x', z')} = (-1)^{\langle x, z' \rangle + \langle z, x' \rangle}$.
    Let us permute columns by swapping the roles of $x', z'$ to obtain a new matrix $\Gamma'$, then it suffices to show that $\norm{\Gamma'} = 2^n$.
    We have $\Gamma'_{(x, z); (z', x')} = (-1)^{\langle x, z' \rangle + \langle z, x' \rangle}$, so it is apparent that $\frac{1}{2^n} \Gamma'$ is a submatrix of the $2^{2n} \times 2^{2n}$ normalized Hadamard matrix (removing the row and column corresponding to $(0, 0)$), which is unitary.
    It immediately follows that $\norm{\Gamma'/2^n} \leq \norm{H} = 1 \Rightarrow \norm{\Gamma'} \leq 2^n$.
    To see that this is an equality, note that:
    \begin{align*}
      H &= \begin{bmatrix} \frac{1}{2^n} & \frac{1}{2^n}\one^\top \\ \frac{1}{2^n}\one & \frac{1}{2^{n}}\Gamma' \end{bmatrix} \\
      \Rightarrow 2^{2n} \id = (2^{n}H)^2 &= \begin{bmatrix} 1 & \one^\top \\ \one & \Gamma' \end{bmatrix}^2 \\
      &= \begin{bmatrix} 1 & \one^\top \\ \one & \Gamma' \end{bmatrix} \begin{bmatrix} 1 & \one^\top \\ \one & \Gamma' \end{bmatrix} \\
      &= \begin{bmatrix} \cdot & \cdot \\ \cdot & \one\one^\top + (\Gamma')^2\end{bmatrix} \\
      \Rightarrow 2^{2n} \id_{2^{2n}-1} &= \one\one^\top + (\Gamma')^2.
    \end{align*}
    So for a nonzero vector $\vec{v} \in \one^\perp \subseteq \CC^{2^{2n}-1}$, we have $2^{2n} \vec{v} = (\Gamma'^2) \vec{v}$, forcing $\norm{\Gamma'^2} \geq 2^{2n} \Rightarrow \norm{\Gamma'} \geq 2^{n}$.
\end{proof}

\subsection{Choi-Jamiołkowski Isomorphism}\label{sec:CI}

Let $\algo H_{\reg A}$ be a $d$-dimensional Hilbert space with an orthonormal basis denoted by $\{\ket{1},\dots,\ket{d}\}$. Let $\ket{\Omega} = \sum_{i \in [d]} \ket{i} \otimes \ket{i}$ be the vectorization of the identity $\id_d = \sum_{i \in [d]} \proj{i}$. Then, the normalized Choi-Jamiołkowski isomorphism $J(\Phi) \in \linear(\algo H_{\reg B} \otimes \algo H_{\reg A'})$ with respect to a CPTP map of the form $\Phi: \linear(\algo H_{\reg A}) \rightarrow \linear(\algo H_{\reg B})$ is defined as 
$$
J(\Phi) =  \frac{1}{d} (\Phi_{\reg A \rightarrow \reg B} \otimes \id_{\reg{A}'})(\proj{\Omega}) = \frac{1}{d}\sum_{i,j \in [d]} \Phi(\ketbra{i}{j}) \otimes \ketbra{i}{j}.
$$
Note that $J(\Phi)$ is a valid quantum state.
We use the following well known fact (copied here from~\cite[Corollary 2.2]{PorembaRagavanVaikuntanathan2026} while taking our different normalization convention into account).
\begin{lemma}\label{lem:CIprojector}
Let $\Phi: \linear(\algo H_{\reg A}) \rightarrow \linear(\algo H_{\reg B})$ be any linear map. Then, 
    for any Hermitian operators $\vec{P} \in \mathrm{L}(\algo{H}_{\reg B})$ and $\vec{Q} \in \mathrm{L}(\algo{H}_{\reg A})$, it holds that $$\Tr{\vec{P} \Phi(\vec{Q})} = d \cdot \Tr{\left(\vec{P} \otimes \bar{\vec{Q}}\right)J(\Phi)}.$$
\end{lemma}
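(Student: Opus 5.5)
By linearity in $\vec{Q}$ and in $\vec{P}$, it suffices to check the identity when $\vec{Q}$ ranges over the matrix units $\ketbra{i}{j}$. Hermiticity is not actually needed: the identity is bilinear in $(\vec{P},\vec{Q})$, with $\vec{Q}$ entering through $\bar{\vec{Q}}$, which is linear in the entries of $\vec{Q}$. So the plan is to expand $\vec{Q}=\sum_{k,l} Q_{kl}\ketbra{k}{l}$ in the fixed basis $\{\ket{1},\dots,\ket{d}\}$ used to define $\ket{\Omega}$. Then $\bar{\vec{Q}}=\sum_{k,l}\overline{Q_{kl}}\ketbra{k}{l}$, where the conjugation is entrywise in that same basis.

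Next I would compute the right-hand side directly from the definition of $J(\Phi)$:
\[
d\cdot\Tr{(\vec{P}\otimes\bar{\vec{Q}})J(\Phi)}=\sum_{i,j}\Tr{\vec{P}\Phi(\ketbra{i}{j})}\,\Tr{\bar{\vec{Q}}\ketbra{i}{j}}.
\]
This uses only $\Tr{(X\otimes Y)(Z\otimes W)}=\Tr{XZ}\Tr{YW}$, and the factor $d$ cancels the $1/d$ normalization. The second trace is $\bra{j}\bar{\vec{Q}}\ket{i}=\overline{Q_{ji}}$. Since $\vec{Q}$ is Hermitian, $\overline{Q_{ji}}=Q_{ij}$.

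Plugging this in, the right-hand side becomes $\sum_{i,j}Q_{ij}\Tr{\vec{P}\Phi(\ketbra{i}{j})}$. By linearity of $\Phi$ and of the trace, this equals $\Tr{\vec{P}\Phi(\sum_{i,j}Q_{ij}\ketbra{i}{j})}=\Tr{\vec{P}\Phi(\vec{Q})}$, which is the left-hand side.

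So Hermiticity of $\vec{Q}$ is used exactly once, to identify $\overline{Q_{ji}}$ with $Q_{ij}$; for general $\vec{Q}$ one would instead get the identity with $\vec{Q}^\top$ in place of $\bar{\vec{Q}}$. Hermiticity of $\vec{P}$ is not needed at all. There is no real obstacle here. The only point requiring care is the index bookkeeping, namely that $\Tr{\bar{\vec{Q}}\ketbra{i}{j}}$ picks out the $(j,i)$ entry and not the $(i,j)$ entry. Hermiticity is exactly what resolves the resulting transpose and conjugation.
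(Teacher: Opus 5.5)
The paper does not prove this lemma; it imports it from \cite[Corollary 2.2]{PorembaRagavanVaikuntanathan2026}, adjusted for its normalization of $J(\Phi)$. Your computation in the second and third paragraphs is correct and is the standard argument. Expanding $J(\Phi)$, the factor $d$ cancels the $1/d$. Then $\Tr{\bar{\vec{Q}}\ketbra{i}{j}}=\overline{Q_{ji}}=Q_{ij}$ by Hermiticity, and linearity of $\Phi$ reassembles $\Tr{\vec{P}\Phi(\vec{Q})}$. Your remark that the $\vec{Q}^\top$ version holds for all $\vec{Q}$ is also correct.

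You should, however, delete your opening paragraph. It is false and contradicts your own later remark. The map $\vec{Q}\mapsto\bar{\vec{Q}}$ is conjugate-linear, not linear. So the right-hand side is conjugate-linear in $\vec{Q}$ while the left-hand side is linear. The identity is therefore not bilinear, and it genuinely needs Hermiticity. Two counterexamples:
\begin{itemize}
\item For $\vec{Q}=\sqrt{-1}\,\id$, the left side is $\sqrt{-1}\,\Tr{\vec{P}\Phi(\id)}$ and the right side is $-\sqrt{-1}\,\Tr{\vec{P}\Phi(\id)}$. These differ whenever $\Tr{\vec{P}\Phi(\id)}\neq 0$.
\item For an off-diagonal matrix unit $\vec{Q}=\ketbra{k}{l}$, the left side is $\Tr{\vec{P}\Phi(\ketbra{k}{l})}$, while the right side is $\Tr{\vec{P}\Phi(\ketbra{l}{k})}$.
\end{itemize}
So checking the stated identity on matrix units would mean checking a false statement. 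A valid reduction would be either of these:
\begin{itemize}
\item Verify the honestly bilinear $\vec{Q}^\top$ version on matrix units, then use $\bar{\vec{Q}}=\vec{Q}^\top$ for Hermitian $\vec{Q}$.
\item Use real-linearity in $\vec{Q}$ and check on a real basis of Hermitian matrices.
\end{itemize}
Your actual computation never uses this reduction, so the proof stands once that paragraph is removed.
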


\subsection{One-Time Secure Unclonable Encryption}

Consider a key-indexed pair of states $\rho_{\theta,0},\rho_{\theta,1}$.
The plaintext $a\leftarrow\bit$ and key $\theta$ are uniform.  The cloner
first applies a key-independent channel
\[
  \Phi:A_0\longrightarrow B\ot C
\]
to the ciphertext.  The key is then revealed, after which Bob and Charlie use
arbitrary key-dependent binary POVMs
$\{D^B_{\theta,a}\}_{a\in\bit}$ and
$\{D^C_{\theta,a}\}_{a\in\bit}$.  Their joint success probability is
\begin{equation}\label{eq:security-game}
  \frac1{2|\Theta|}\sum_{\theta\in\Theta}\sum_{a\in\bit}
  \Tr{(D^B_{\theta,a}\ot D^C_{\theta,a})
  \Phi(\rho_{\theta,a})}.
\end{equation}
The scheme is one-time unclonably secure if this is at most
$1/2+\negl(n)$ for every attack.

\section{The Unclonable Encryption Scheme}\label{sec:construction}

To help highlight the important aspects of our proof, we formulate our construction in slightly more general terms than our Pauli instantiation.

\begin{construction}\label{con:pauli}
Fix $n \geq 1$ and put $d = 2^n$.
Let $\{Q_\theta: \theta \in \Theta_n\}$ be any collection of traceless Hermitian reflections of dimension $2^n$.
Let $M = |\Theta_n|$.
We will often abuse notation and let $\Theta = \Theta_n$.
\begin{itemize}
  \item Key generation: sample a uniformly random $\theta \gets \Theta$.
  \item To encrypt
  $a\in\bit$, output
  \begin{equation}\label{eq:encryption-state}
    \rho_{\theta,a}
    =\frac1d\bigl(\id+(-1)^aQ_\theta\bigr).
  \end{equation}
  An equivalent formulation is to send a uniformly random $(-1)^a$ eigenstate of $Q_\theta$.
  \item To decrypt, measure the reflection $Q_\theta$ and output the bit whose sign
  $(-1)^a$ equals the measured eigenvalue.
\end{itemize}
\end{construction}

\begin{lemma}[Correctness]\label{lem:correctness}
For every $\theta\in\Theta$ and $a\in\bit$,
$\rho_{\theta,a}$ is a density operator.  Construction~\ref{con:pauli} is
perfectly correct.
\end{lemma}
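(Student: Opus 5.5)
The plan is to check the two claims of Lemma~\ref{lem:correctness} separately, using only that each $Q_\theta$ is a traceless Hermitian reflection.

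For the first claim, I will use the spectral decomposition of $Q_\theta$. Since $Q_\theta^2=\id$ and $Q_\theta$ is Hermitian, its eigenvalues lie in $\{\pm1\}$. Write $\Pi_\theta^{\pm}=(\id\pm Q_\theta)/2$ for the spectral projectors onto the $\pm1$ eigenspaces. These are orthogonal projections, since $\left(\frac{\id\pm Q_\theta}{2}\right)^2=\frac{\id\pm 2Q_\theta+\id}{4}=\frac{\id\pm Q_\theta}{2}$. It follows that
\[
\rho_{\theta,a}=\tfrac{2}{d}\,\Pi_\theta^{(-1)^a}\succeq 0 .
\]
It is also Hermitian, and $\Tr{\rho_{\theta,a}}=\frac1d(d+(-1)^a\Tr{Q_\theta})=1$ because $Q_\theta$ is traceless. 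So $\rho_{\theta,a}$ is a density operator.

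Tracelessness also shows that $\Tr{\Pi_\theta^{\pm}}=d/2$. Hence each eigenspace has dimension $d/2$, and $\rho_{\theta,a}$ is exactly the maximally mixed state on the $(-1)^a$ eigenspace. This justifies the ``equivalent formulation'' stated in Construction~\ref{con:pauli}.

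For perfect correctness, measuring the reflection $Q_\theta$ means applying the projective measurement $\{\Pi_\theta^{+},\Pi_\theta^{-}\}$ and outputting $0$ on outcome $+1$ and $1$ on outcome $-1$. The probability of outputting the correct bit $a$ is
\[
\Tr{\Pi_\theta^{(-1)^a}\rho_{\theta,a}}=\tfrac{2}{d}\Tr{\Pi_\theta^{(-1)^a}}=1,
\]
by idempotence and the dimension count above. Equivalently, $\Tr{\Pi_\theta^{-(-1)^a}\rho_{\theta,a}}=0$ because $\Pi^+_\theta\Pi^-_\theta=(\id-Q_\theta^2)/4=0$.

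For the Pauli instantiation, I will note that the hypotheses hold. Each $P_v$ with $v\neq(0,0)$ is a Hermitian reflection, and it is traceless by Lemma~\ref{lem:paulibasic} applied with $w=(0,0)$, where $P_{(0,0)}=\id$.

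I do not expect a real obstacle. The only point needing care is that tracelessness is what makes the eigenspaces equal-dimensional, which is what gives trace one. This is exactly why the footnote excludes the identity Pauli from the key space.
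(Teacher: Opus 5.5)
Your proof is correct and follows essentially the same route as the paper: write $\rho_{\theta,a}=(2/d)E_{\theta,a}$ with $E_{\theta,a}=\frac12(\id+(-1)^aQ_\theta)$ the spectral projector, use tracelessness to get $\Tr{E_{\theta,a}}=d/2$ and hence trace one, and observe that measuring $\{E_{\theta,0},E_{\theta,1}\}$ on a state supported on $E_{\theta,a}$ returns $a$ with certainty. Your extra check that non-identity Paulis are traceless, via Lemma~\ref{lem:paulibasic} with $w=(0,0)$, matches the paper's brief remark in the Pauli instantiation subsection.
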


\begin{proof}
Since $Q_\theta$ is a Hermitian reflection,
\[
  E_{\theta,a}=\frac12(\id+(-1)^aQ_\theta)
\]
is the orthogonal projector onto its $(-1)^a$ eigenspace.  Tracelessness of
$Q_\theta$ gives $\Tr{E_{\theta,a}}=d/2$, so
$\rho_{\theta,a}=(2/d)E_{\theta,a}$ is positive semidefinite and has trace
one.  Measuring $\{E_{\theta,0},E_{\theta,1}\}$ on a state supported on
$E_{\theta,a}$ returns $a$ with probability one.
\end{proof}

\subsection{The Pauli Instantiation}

Our main instantiation will take $\Theta_n = V_n \backslash \{(0, 0)\}$ to be the collection of all non-identity Paulis, and $Q_\theta = P_\theta$.
Lemma~\ref{lem:paulibasic} tells us that these are traceless Hermitian reflections and hence we get correctness.

To see the efficiency of this instantiation, let the key be $(\theta_1, \ldots, \theta_n)$ and write $S=\{i:\theta_i\neq I\}$.  To encrypt, sample one-qubit eigenvalue signs $\{\sigma_j \in \{\pm 1\}: j \in S\}$ uniformly subject to $\prod_{j \in S} \sigma_j = (-1)^a$, prepare the corresponding one-qubit Pauli eigenstates, and
prepare the maximally mixed state on coordinates outside $S$.  The resulting
mixture is exactly the maximally mixed state in~\eqref{eq:encryption-state}.
Decryption measures the single-qubit Paulis on $S$ and takes the parity of the
outcomes.  Sampling a nonzero Pauli label, encrypting, and decrypting all take
$O(n)$ elementary operations.

\section{Proof of Security}\label{sec:fullsecurity}

We begin by stating the general security theorem that we will prove in this section:

\begin{theorem}[General unclonability]\label{thm:security}
Consider the construction of Section~\ref{sec:construction}, and assume additionally that there exists a matrix $\Gamma \in \{-1, 1\}^{|\Theta| \times |\Theta|}$ such that:
\begin{equation}\label{eq:commdef}
Q_\phi Q_\theta = \Gamma_{\phi, \theta} Q_\theta Q_\phi \quad \forall \theta, \phi \in \Theta.
\end{equation}
Then for every splitting attack in the game of
Section~\ref{sec:construction},
\begin{equation}\label{eq:main-bound}
  \Pr[\textnormal{Bob and Charlie decrypt correctly}]
  \leq
  \frac12+\frac12\sqrt{\frac{\norm{\Gamma}}{|\Theta|}}.
\end{equation}
\end{theorem}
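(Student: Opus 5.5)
The plan is to follow the route of the technical overview: reduce the game to an operator inequality, get a spectral bound from $\Gamma$, and then convert it to the target with a linear-algebraic lemma. Throughout, set $\epsilon=\norm{\Gamma}/|\Theta|$ and $\delta=\sqrt{\epsilon}$.

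\emph{Step 1: reduction to an operator inequality.} By Naimark dilation (enlarging $B$ and $C$), we may assume Bob's and Charlie's POVMs are projective. Write them as $D^B_{\theta,a}=\frac12(\id+(-1)^aB_\theta)$ and $D^C_{\theta,a}=\frac12(\id+(-1)^aC_\theta)$, with $B_\theta, C_\theta$ Hermitian reflections. Apply Lemma~\ref{lem:CIprojector} to \eqref{eq:security-game} with $\rho_{\theta,a}=\frac1d(\id+(-1)^aQ_\theta)$. The factor $d$ cancels, and summing over $a$ gives
\[
\frac12\sum_a D^B_{\theta,a}D^C_{\theta,a}\ot\bigl(\id+(-1)^a\bar Q_\theta\bigr)=\tfrac14\bigl(\id+B_\theta C_\theta+\bar Q_\theta B_\theta+\bar Q_\theta C_\theta\bigr).
\]
This is exactly $\bigl(\frac{\id+P_\theta^\top B_\theta}{2}\bigr)\bigl(\frac{\id+P_\theta^\top C_\theta}{2}\bigr)$ with $P_\theta:=Q_\theta$, using $\bar Q_\theta=Q_\theta^\top$ since $Q_\theta$ is Hermitian. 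Hence the success probability is $\langle\Xi\rangle_{J(\Phi)}$, and it suffices to prove $\Xi\preceq(\frac12+\frac{\delta}{2})\id$. The operators $Q^\top_\theta$, $B_\phi$, $C_\psi$ act on different registers and so pairwise commute. Transposing \eqref{eq:commdef} gives $Q^\top_\theta Q^\top_\phi=\Gamma_{\phi,\theta}Q^\top_\phi Q^\top_\theta$. Expanding the product and averaging gives \eqref{eq:xialtoverview}, $\Xi=\frac12\id+\frac14(S_B+S_C)-\frac12 D$. The target therefore becomes
\[
S_B+S_C-2D\preceq 2\delta\,\id .
\]
I would also record the easy facts $0\preceq D\preceq\id$ and $D\succeq\pm\frac{S_B-S_C}{2}$. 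The first holds because each $\frac{\id-B_\theta C_\theta}{2}$ is a projector, as $B_\theta$ and $C_\theta$ are commuting reflections. For the second, each term $\frac{\id-B_\theta C_\theta}{2}\mp\frac{Q^\top_\theta(B_\theta-C_\theta)}{2}$ is a sum of commuting projectors of the form $\frac{\id\pm Q^\top B}{2}\cdot\frac{\id\mp Q^\top C}{2}$, hence PSD.

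\emph{Step 2: the spectral bound $\norm{S_BS_C}\le\epsilon$.} Commute $Q^\top_\phi$ past $Q^\top_\theta$ to write
\[
S_BS_C=\frac1{|\Theta|^2}\sum_{\phi,\theta}\Gamma_{\phi,\theta}U_\phi V_\theta,\qquad U_\phi=C_\phi Q^\top_\phi,\quad V_\theta=Q^\top_\theta B_\theta .
\]
For unit vectors $x,y$, bound $|\langle x,S_BS_Cy\rangle|$ by $\frac{1}{|\Theta|^2}\norm{\Gamma}\bigl(\sum_\phi\norm{U_\phi^*x}^2\bigr)^{1/2}\bigl(\sum_\theta\norm{V_\theta y}^2\bigr)^{1/2}=\epsilon$. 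This follows from Cauchy--Schwarz on the vectors $(U_\phi^* x)_\phi$ and $(V_\theta y)_\theta$, tensored with $\Gamma$. The same argument applies verbatim to any word $S_B W S_C$ in which $W$ is a product of $B_\theta, C_\theta$ factors commuting with all $Q^\top$'s. It gives $\norm{S_B W S_C}\le\epsilon\,\norm{W}$ after averaging over the inner indices; I would prove it in this generality.

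\emph{Step 3: the resolvent bound.} Expand $(D+\delta\id)^{-1}$ as a power series around a multiple of the identity. For example, use $\frac{1}{1+\delta}\sum_k\bigl(\frac{\id-D}{1+\delta}\bigr)^k$, which converges since $0\preceq \id-D\preceq\id$. Each power of $\id-D=\frac1{|\Theta|}\sum_\theta\frac{\id+B_\theta C_\theta}{2}$ is an average of words of norm at most $1$ that commute with every $Q^\top$. Step 2 then bounds each term, and summing the geometric series gives $\norm{S_B(D+\delta\id)^{-1}S_C}\le\epsilon/\delta=\delta$.

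\emph{Step 4: the separator (main obstacle).} It remains to show that $\norm{S_B(D+\delta)^{-1}S_C}\le\delta$, together with the facts $0\preceq D\preceq\id$, $D\succeq\pm\frac{S_B-S_C}{2}$ and $\norm{S_B},\norm{S_C}\le1$, implies $S_B+S_C\preceq 2D+2\delta$. This is the delicate linear-algebra step. My first attempt would be a Schur-complement argument. Set $X=\frac{S_B+S_C}{2}$, $Y=\frac{S_B-S_C}{2}$ and $E=D+\delta$. The hypothesis $E\succeq \pm Y+\delta$ suggests comparing $X$ with $E$ via a $2\times2$ block matrix $\begin{bmatrix}E & S_B\\ S_C & E\end{bmatrix}$ or its $\pm$ variants. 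Its positivity, or positivity of a suitable compression, would encode both hypotheses: the diagonal via $E\succeq\pm Y$, and the off-diagonal Schur complement $E-S_CE^{-1}S_B$ via the cross-resolvent bound. Compressing it with the vector $(x,x)$ would then yield $\langle x,(2E-S_B-S_C)x\rangle\ge -(\text{error})$.

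I expect the hard part to be choosing the right block matrix and weights so that the error comes out as exactly $2\delta$, giving the stated constant $\frac12+\frac12\sqrt{\epsilon}$. This likely requires optimizing a scalar parameter in the certificate rather than using a naive Cauchy--Schwarz.
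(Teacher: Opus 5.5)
Your Steps 1--3 are correct and match the paper's Lemma~\ref{lem:game-decomposition}, Lemma~\ref{lem:arbitrary-word} and Corollary~\ref{cor:cross-resolvent}. Expanding $(D+\delta\id)^{-1}$ in powers of $\id-D$ instead of $\id-2D$ is a harmless variant that gives the same $\epsilon/\delta$. One small correction: your general claim $\norm{S_BWS_C}\le\epsilon\norm{W}$ is more than your argument gives. For $W=W_BW_C$ it gives $\epsilon\norm{W_B}\norm{W_C}$; more simply, $S_BW_BW_CS_C=W_CS_BS_CW_B$. Since $(\id-D)^k$ is a convex combination of such unitary words, that is all you need.

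The genuine gap is Step 4. It is the crux of the proof (the paper's Theorem~\ref{thm:separator}), and you leave it open. Your sketch also fails as stated:
\begin{itemize}
\item $\begin{bmatrix}E & S_B\\ S_C & E\end{bmatrix}$ is not Hermitian when $S_B\neq S_C$, so its ``positivity'' (or that of $E-S_CE^{-1}S_B$) is not meaningful.
\item The hypothesis $\norm{S_BE^{-1}S_C}\le\delta$ is a norm bound on a non-Hermitian cross term, and by itself certifies no Schur complement.
\item Compressing with $(x,x)$ yields $2E+S_B+S_C$, not $2E-S_B-S_C$.
\end{itemize}

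The missing idea is to symmetrize the cross term and polarize. Put $X=\frac{S_B+S_C}{2}$, $Y=\frac{S_B-S_C}{2}$, $E=D+\delta\id$ and $K=S_BE^{-1}S_C$. Then $XE^{-1}X-YE^{-1}Y=\frac12(K+K^*)$, so $\norm{K}\le\delta$ gives $XE^{-1}X\preceq YE^{-1}Y+\delta\id$.

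Separately, conjugating by the block Hadamard matrix shows that $D\succeq\pm Y$ is equivalent to $\begin{bmatrix}D & Y\\ Y & D\end{bmatrix}\succeq0$. Hence $\begin{bmatrix}D & Y\\ Y & E\end{bmatrix}\succeq0$, and the Schur complement with respect to the positive definite block $E$ gives $D\succeq YE^{-1}Y$.

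Adding $\delta\id$ and chaining gives $E\succeq YE^{-1}Y+\delta\id\succeq XE^{-1}X$, i.e.\ $\begin{bmatrix}E & X\\ X & E\end{bmatrix}\succeq0$. Undoing the Hadamard conjugation gives $E\succeq\pm X$, in particular $\frac{S_B+S_C}{2}-D\preceq\delta\id$.

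No scalar parameter has to be optimized. With $\delta=\sqrt{\epsilon}$ the resolvent bound is exactly $\delta$, and the constant $2\delta$ falls out directly. The extra facts $D\preceq\id$ and $\norm{S_B},\norm{S_C}\le1$ that you planned to use are not needed.
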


\subsection{Ricochet Reduction}\label{sec:reduction}

First, we apply some standard manipulations~\cite{BroadbentLord2020} based on Naimark dilation and the EPR ricochet property to reduce security to a simpler linear-algebraic task.
To this end, fix an arbitrary splitting channel and arbitrary key-dependent local binary
POVMs.
By Naimark's dilation theorem, we may assume without loss of generality that Bob's measurement is determined by an ensemble of binary projective measurements $\{(D^B_{\theta, 0}, D^B_{\theta, 1}): \theta \in \Theta\}$, and similarly for Charlie.
(This was shown in~\cite{TomamichelEtAl2013}, so we do not reproduce a proof here.)
We succinctly encode these in the operators:
\[
  B_\theta=D^B_{\theta,0}-D^B_{\theta,1},
  \qquad
  C_\theta=D^C_{\theta,0}-D^C_{\theta,1}.
\]
It is straightforward to see that $B_\theta, C_\theta$ are Hermitian and that $B_\theta^2 = C_\theta^2 = \id$ for all $\theta \in \Theta$.
Since these reflections act on distinct receiver
registers we immediately have commutation:
\begin{equation}\label{eq:cross-commutation}
  B_\theta C_\phi = C_\phi B_\theta
  \qquad\text{for all $\theta,\phi$.}
\end{equation}
Note that there is no commutation assumption among two Bob reflections or two Charlie reflections.

The following lemma can be interpreted as switching to the following experiment: Alice and the cloner initially share $n$ EPR pairs.
The cloner then acts arbitrarily on their end of the state and forwards the results to Bob and Charlie.
Bob and Charlie make their measurements as usual, and Alice measures with the projectors $\frac{1}{2}(\id \pm Q_\theta^\top)$.
The adversaries win if all three measurement results agree.
In the following, all stated operators will be over some common dimension $\tilde{d}$.

\begin{lemma}[Winning operator]\label{lem:winning-operator}

The success probability of the fixed cloning attack is $\omega = \Tr{\Xi J(\Phi)}$, where $J(\Phi)$ is the normalized Choi state as defined in Section~\ref{sec:CI}, and:
\begin{equation}\label{eq:Xi-Pi}
  \Xi=\frac1M\sum_{\theta\in\Theta_n}\Pi_\theta,
  \qquad
  \Pi_\theta=\frac14(\id+Q_\theta^\top B_\theta)
                         (\id+Q_\theta^\top C_\theta).
\end{equation}
Moreover, every $\Pi_\theta$ is an orthogonal projection, and hence
$0\preceq\Xi\preceq\id$.
\end{lemma}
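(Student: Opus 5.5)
We will rewrite the success probability \eqref{eq:security-game} using Lemma~\ref{lem:CIprojector}, and then check the projection claim by direct algebra.

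\textbf{Success probability.} Fix $\theta$ and $a$. We have $\rho_{\theta,a} = \frac{2}{d}E_{\theta,a}$, where $E_{\theta,a}=\frac12(\id+(-1)^aQ_\theta)$ is Hermitian. Apply Lemma~\ref{lem:CIprojector} with $\vec P = D^B_{\theta,a}\ot D^C_{\theta,a}$ (Hermitian, after Naimark dilation projective) and $\vec Q = \rho_{\theta,a}$. This gives
\[
  \Tr{(D^B_{\theta,a}\ot D^C_{\theta,a})\Phi(\rho_{\theta,a})}
  = 2\,\Tr{\bigl(D^B_{\theta,a}\ot D^C_{\theta,a}\ot \bar E_{\theta,a}\bigr)J(\Phi)}.
\]
Since $Q_\theta$ is Hermitian, $\bar Q_\theta = Q_\theta^\top$, so $\bar E_{\theta,a}=\frac12(\id+(-1)^aQ_\theta^\top)$. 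Plugging into \eqref{eq:security-game}, the factor $2$ cancels the $\frac12$ from averaging over $a$. The success probability is therefore $\Tr{\Xi' J(\Phi)}$, where
\[
  \Xi' = \frac1M\sum_\theta\sum_a D^B_{\theta,a}D^C_{\theta,a}\bar E_{\theta,a},
\]
with each operator acting on its own register.

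\textbf{Matching $\Pi_\theta$.} We substitute $D^B_{\theta,a}=\frac12(\id+(-1)^aB_\theta)$, $D^C_{\theta,a}=\frac12(\id+(-1)^aC_\theta)$ and $\bar E_{\theta,a}=\frac12(\id+(-1)^aQ^\top_\theta)$. Expanding and summing over $a\in\bit$, the terms with an odd number of factors cancel and the even ones double. This yields
\[
  \sum_a D^B_{\theta,a}D^C_{\theta,a}\bar E_{\theta,a}
  = \tfrac14\bigl(\id + B_\theta C_\theta + Q_\theta^\top B_\theta + Q_\theta^\top C_\theta\bigr).
\]
On the other side, expand $\frac14(\id+Q_\theta^\top B_\theta)(\id+Q_\theta^\top C_\theta)$. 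The three operators act on distinct registers, so they commute, and $(Q_\theta^\top)^2=\id$. Hence the cross term is $Q_\theta^\top B_\theta Q_\theta^\top C_\theta = B_\theta C_\theta$, and the two expressions agree. This shows $\Xi'=\Xi$.

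\textbf{Projection property.} Set $X = Q_\theta^\top B_\theta$ and $Y = Q_\theta^\top C_\theta$. Each is a product of commuting Hermitian reflections, so each is itself a Hermitian reflection. Moreover $X$ and $Y$ commute, because all constituent factors pairwise commute. Then $\frac12(\id+X)$ and $\frac12(\id+Y)$ are commuting orthogonal projections, and their product $\Pi_\theta$ is an orthogonal projection. It follows that $0\preceq\Pi_\theta\preceq\id$, and averaging over $\theta$ gives $0\preceq\Xi\preceq\id$.

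The only delicate step is the bookkeeping in the first part: getting the conjugate right, $\bar Q_\theta=Q_\theta^\top$, and tracking the normalization $d$ against the $2/d$ in $\rho_{\theta,a}$. Everything after that is routine algebra.
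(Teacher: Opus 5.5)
Your proof is correct and follows essentially the same route as the paper. You apply Lemma~\ref{lem:CIprojector} term by term, writing $\rho_{\theta,a}=\frac{2}{d}E_{\theta,a}$ where the paper plugs in $\frac1d(\id+(-1)^aQ_\theta)$ directly, which is only a cosmetic difference. You then sum over $a$ to cancel the odd terms, match the result with the expansion of $\Pi_\theta$ using cross-register commutation and $(Q_\theta^\top)^2=\id$, and conclude that $\Pi_\theta$ is a product of commuting orthogonal projections.
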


\begin{proof}
We begin from~\eqref{eq:security-game} and use Lemma~\ref{lem:CIprojector}:
\begin{align*}
    \omega &= \frac1{2|\Theta|}\sum_{\theta\in\Theta}\sum_{a\in\bit}
  \Tr{\left(\frac{\id + (-1)^a B_\theta}{2}\right) \left(\frac{\id + (-1)^a C_\theta}{2}\right)
  \Phi\left(\frac1d\bigl(\id+(-1)^aQ_\theta\bigr)\right)} \\
  &= \frac1{2|\Theta|}\sum_{\theta\in\Theta}\sum_{a\in\bit}
  \Tr{\left(\frac{\id + (-1)^a B_\theta}{2}\right) \left(\frac{\id + (-1)^a C_\theta}{2}\right)
  \left(\id+(-1)^aQ_\theta^\top\right)J(\Phi)} \\
  &= \frac1{4|\Theta|}\sum_{\theta\in\Theta}
  \Tr{\left(\id + B_\theta C_\theta + B_\theta Q_\theta^\top + C_\theta Q_\theta^\top\right)J(\Phi)} \\
  &= \frac{1}{|\Theta|} \sum_{\theta \in \Theta} \Tr{\Pi_\theta J(\Phi)} \\
  &= \Tr{\Xi J(\Phi)},
\end{align*}
as claimed.
To see the latter claims, note that $Q_\theta^\top, B_\theta, C_\theta$ are commuting reflections so it follows that $\frac{\id + Q_\theta^\top B_\theta}{2}, \frac{\id + Q_\theta^\top C_\theta}{2}$ are commuting projectors, so their product $\Pi_\theta$ is also a projector.
The claim that their average $\Xi$ is PSD and has all eigenvalues $\leq 1$ immediately follows.
\end{proof}

\begin{remark}\label{remark:registers}
    To clarify, we are abusing notation slightly: what we should really do is define $Q_\theta, B_\theta, C_\theta$ locally on Alice's, Bob's, and Charlie's registers, then define:
    \begin{align*}
        \widehat{Q}_\theta &= Q_{\theta, \reg{A}} \otimes \id_{\reg{B}} \otimes \id_{\reg{C}} \\
        \widehat{B}_\theta &= \id_{\reg{A}} \otimes B_{\theta, \reg{B}} \otimes \id_{\reg{C}} \\
        \widehat{C}_\theta &= \id_{\reg{A}} \otimes \id_{\reg{B}} \otimes C_{\theta, \reg{C}},
    \end{align*}
    and then work with the operators $\widehat{Q}_\theta, \widehat{B}_\theta, \widehat{C}_\theta$.
    The abuse of notation is that we take $Q_\theta$ to mean $\widehat{Q}_\theta$, and similarly for $B_\theta, C_\theta$.
\end{remark}

\subsection{Setup}\label{sec:newops}

We refer the reader to Section~\ref{sec:overviewnewops} for an intuitive discussion of the operators defined here and their properties.

\begin{definition}
    Define the following Hermitian operators:
    \begin{align}
      S_B&=\frac1M\sum_\theta Q_\theta^\top B_\theta,
      &S_C&=\frac1M\sum_\theta Q_\theta^\top C_\theta,
      &D&=\frac{1}{M} \sum_\theta \frac{\id-B_\theta C_\theta}{2}.
    \end{align}
    Also, define the parameter 
    $$\epsilon := \epsilon_n =\frac{\norm{\Gamma}}{M}.$$
\end{definition}

\begin{lemma}
\label{lem:game-decomposition}
The winning operator satisfies
\begin{equation}\label{eq:Xi-YR}
  \Xi=\frac12\id+\frac14(S_B+S_C) - \frac{1}{2}D.
\end{equation}
Furthermore,
\begin{equation}\label{eq:R-constraints}
  0\preceq D \preceq\id,
  \qquad D \succeq \frac{1}{2}(S_B-S_C), \qquad D \succeq \frac{1}{2}(S_C-S_B).
\end{equation}
\end{lemma}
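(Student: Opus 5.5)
The plan is to prove the identity \eqref{eq:Xi-YR} by expanding each $\Pi_\theta$, and then to derive each inequality in \eqref{eq:R-constraints} from a pointwise-in-$\theta$ operator inequality that we average over $\theta$. Throughout we use that $Q_\theta^\top$, $B_\theta$, $C_\theta$ are pairwise commuting Hermitian reflections. This holds because they act on distinct registers (Remark~\ref{remark:registers}), and $Q_\theta^\top$ is a reflection since $Q_\theta$ is.

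\textbf{The identity.} Expanding the product gives
\[
\Pi_\theta=\tfrac14\bigl(\id+Q_\theta^\top B_\theta+Q_\theta^\top C_\theta+(Q_\theta^\top)^2B_\theta C_\theta\bigr)=\tfrac14\bigl(\id+Q_\theta^\top B_\theta+Q_\theta^\top C_\theta+B_\theta C_\theta\bigr).
\]
Here we move $C_\theta$ past $Q_\theta^\top$ and use $(Q_\theta^\top)^2=\id$. Next we rewrite $\tfrac14(\id+B_\theta C_\theta)=\tfrac12\id-\tfrac12\cdot\tfrac{\id-B_\theta C_\theta}{2}$. Averaging over $\theta$ then yields $\Xi=\tfrac12\id+\tfrac14(S_B+S_C)-\tfrac12D$ directly from the definitions. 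Along the way we note that $S_B,S_C,D$ are Hermitian: each summand is a product of commuting Hermitian operators.

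\textbf{$0\preceq D\preceq\id$.} Since $B_\theta$ and $C_\theta$ commute, $B_\theta C_\theta$ is a Hermitian reflection. Hence $\tfrac12(\id-B_\theta C_\theta)$ is an orthogonal projector, lying between $0$ and $\id$, and the average $D$ inherits both bounds.

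\textbf{$D\succeq\pm\tfrac12(S_B-S_C)$.} This is the only step needing an idea, and it is mild. We prove the pointwise inequality
\[
\tfrac{\id-B_\theta C_\theta}{2}\succeq \pm\tfrac12 Q_\theta^\top(B_\theta-C_\theta)
\]
and then average. Everything here lives in the commutative algebra generated by the three commuting reflections $Q_\theta^\top,B_\theta,C_\theta$. So it suffices to check the inequality on each joint eigenspace, i.e. for scalars $q,b,c\in\{\pm1\}$: we need $1-bc\ge \pm q(b-c)$. If $b=c$, both sides are $0$. If $b=-c$, the left side is $2$ and the right side is $\pm 2qb$, which is at most $2$. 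An algebraic alternative avoids the eigenspace decomposition. Set $X=\tfrac12(\id+Q_\theta^\top B_\theta)$ and $Y=\tfrac12(\id+Q_\theta^\top C_\theta)$; these are commuting projectors. Then $\tfrac{\id-B_\theta C_\theta}{2}=X+Y-2XY$ and $\tfrac12 Q_\theta^\top(B_\theta-C_\theta)=X-Y$, so the difference equals $2Y(\id-X)\succeq0$, a product of commuting projectors. The opposite sign follows by symmetry. I expect no real obstacle; the only care needed is justifying the joint diagonalization, or equivalently positivity of products of commuting projectors.
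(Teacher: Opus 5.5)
Your proposal is correct and follows essentially the paper's route: the identity comes from direct expansion of $\Pi_\theta$, $0\preceq D\preceq\id$ holds because $D$ is an average of projectors, and the disagreement bounds follow from positivity of the commuting-projector products $X(\id-Y)$ and $(\id-X)Y$ averaged over $\theta$. That last step is exactly the paper's $(\id+sQ_\theta^\top B_\theta)(\id-sQ_\theta^\top C_\theta)\succeq 0$ for $s\in\{\pm1\}$, and your joint-eigenspace check is just an equivalent scalar verification of the same pointwise inequality.
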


\begin{proof}
Expanding~\eqref{eq:Xi-Pi} gives:
\begin{align*}
    \Xi &= \frac{1}{M} \sum_{\theta \in \Theta} \frac14(\id+Q_\theta^\top B_\theta)
                         (\id+Q_\theta^\top C_\theta) \\
        &= \frac{1}{4} \id + \underbrace{\frac{1}{4M} \sum_{\theta \in \Theta} (Q_\theta^\top B_\theta + Q_\theta^\top C_\theta)}_{= (S_B+S_C)/4} + \underbrace{\frac{1}{4M} \sum_{\theta \in \Theta} B_\theta C_\theta}_{= \id/4 - D/2} \\
        &= \frac{1}{2} \id + \frac{1}{4} (S_B+S_C) - \frac{1}{2} D,
\end{align*}
establishing the first identity.
To see the positivity constraints, first note that $D$ is an average over $\theta$ of the projector $\frac{\id - B_\theta C_\theta}{2}$, so $0 \preceq D \preceq \id$ is immediate.
Secondly, for any $s \in \{\pm 1\}$ and any $\theta \in \Theta$, the projectors $\frac{1}{2}(\id + sQ_\theta^\top B_\theta)$ and $\frac{1}{2}(\id - s Q_\theta^\top C_\theta)$ commute, so we have:
\begin{align*}
    0 & \preceq (\id + s Q_\theta^\top B_\theta)(\id - s Q_\theta^\top C_\theta) \\
    \Rightarrow 0 &\preceq \id + \frac{s}{M} \sum_{\theta \in \Theta} Q_\theta^\top B_\theta - \frac{s}{M} \sum_{\theta \in \Theta} Q_\theta^\top C_\theta - \frac{1}{M} \sum_{\theta \in \Theta} B_\theta C_\theta \text{ (averaging over $\theta$)}  \\
    &= 2D + s(S_B-S_C) \\
    \Rightarrow \frac{s}{2}(S_C-S_B) &\preceq D,
\end{align*}
as desired.
\end{proof}

\subsection{Bounds Using Pauli Commutation-Anticommutation Structure}\label{sec:word-bound}

As discussed in Section~\ref{sec:commoverview}, the main property of the reflections $\{Q_\theta\}$ that we will leverage is that their commute--anticommute sign matrix $\Gamma \in \{-1, 1\}^{|\Theta| \times |\Theta|}$ has bounded norm (as we verified for the case of Paulis in Lemma~\ref{lem:symplectic-matrix}).
We next turn this norm estimate into a bound for $S_BS_C$:

\begin{lemma}\label{lem:arbitrary-word}
We have:
\begin{equation}\label{eq:arbitrary-word}
  \norm{S_B S_C}\leq \epsilon = \norm{\Gamma}/M.
\end{equation}
\end{lemma}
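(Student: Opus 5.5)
We expand the product and use the commutation relations to rewrite it as a signed bilinear form in $\Gamma$:
\[
S_BS_C=\frac1{M^2}\sum_{\theta,\phi} Q_\theta^\top B_\theta Q_\phi^\top C_\phi .
\]
Since $B_\theta$ commutes with $Q_\phi^\top$ (different registers), and $C_\phi$ commutes with everything on Alice's and Bob's registers, each term equals $C_\phi\, Q_\theta^\top Q_\phi^\top B_\theta$. Transposing~\eqref{eq:commdef} gives $Q_\theta^\top Q_\phi^\top = \Gamma_{\phi,\theta} Q_\phi^\top Q_\theta^\top$. Hence
\[
S_BS_C=\frac1{M^2}\sum_{\phi,\theta}\Gamma_{\phi,\theta}\,U_\phi V_\theta,
\qquad U_\phi := C_\phi Q_\phi^\top,\quad V_\theta := Q_\theta^\top B_\theta ,
\]
where each $U_\phi$ and $V_\theta$ is unitary, being a product of reflections.

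Next we bound the norm of this bilinear form by factoring it through block operators. Define the row block operator
\[
R=\begin{bmatrix}U_{\phi_1}&\cdots&U_{\phi_M}\end{bmatrix}:\ (\CC^{\tilde d})^{\oplus M}\to\CC^{\tilde d},
\]
the column block operator $K$ with blocks $V_{\theta}$, mapping $\CC^{\tilde d}\to(\CC^{\tilde d})^{\oplus M}$, and the operator $\Gamma\otimes\id_{\tilde d}$. Then
\[
S_BS_C=\frac1{M^2}\,R\,(\Gamma\otimes\id)\,K .
\]
The norms of the factors are as follows:
\begin{itemize}
\item $RR^*=\sum_\phi U_\phi U_\phi^*=M\id$, so $\norm{R}=\sqrt M$;
\item $K^*K=\sum_\theta V_\theta^*V_\theta=M\id$, so $\norm{K}=\sqrt M$;
\item $\norm{\Gamma\otimes\id}=\norm{\Gamma}$.
\end{itemize}
Submultiplicativity then gives
\[
\norm{S_BS_C}\le \frac{\sqrt M\,\norm{\Gamma}\,\sqrt M}{M^2}=\norm{\Gamma}/M=\epsilon .
\]

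The only delicate step is the bookkeeping in the commutation manipulation: getting the transpose and the index order of $\Gamma_{\phi,\theta}$ versus $\Gamma_{\theta,\phi}$ right, and checking which operators may be moved past which. The cross-register commutations supply the needed moves, and no commutation between two of Bob's (or two of Charlie's) operators is ever required. The index order is actually harmless, because $\Gamma$ is symmetric: if $Q_\phi Q_\theta = \gamma Q_\theta Q_\phi$, then multiplying on both sides gives $Q_\theta Q_\phi = \gamma Q_\phi Q_\theta$, since $\gamma = \pm1$. Even without symmetry, $\norm{\Gamma^\top}=\norm{\Gamma}$. The block-operator factorization itself is routine.
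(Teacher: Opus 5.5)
Your proposal is correct and follows essentially the same route as the paper: commute each term into the signed form $\Gamma_{\phi,\theta}(C_\phi Q_\phi^\top)(Q_\theta^\top B_\theta)$, then factor the sum as a row block operator times $\Gamma\otimes\id$ times a column block operator, where each outer factor has norm $\sqrt{M}$. Your computation of $\norm{R}$ via $RR^*=M\id$ is, if anything, a slightly more explicit justification for the row-block factor than the paper's ``similarly.''
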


\begin{proof}
Recall that we can commute any of the relevant unitaries past each other, except for two Bob unitaries or two Charlie unitaries.
In the case of two encoding reflections $Q_\theta$, we can commute them but we must pay a correction factor as per~\eqref{eq:commdef}.
Bearing this in mind, we compute:
\begin{align*}
    S_B S_C &= \frac{1}{M^2} \sum_{\theta, \phi \in \Theta} (Q_\theta^\top B_\theta) (Q_\phi^\top C_\phi) \\
    &= \frac{1}{M^2} \sum_{\theta, \phi \in \Theta} C_\phi Q_\theta^\top Q_\phi^\top B_\theta \\
    &= \frac{1}{M^2} \sum_{\theta, \phi \in \Theta} C_\phi (Q_\phi Q_\theta)^\top B_\theta \\
    &= \frac{1}{M^2} \sum_{\theta, \phi \in \Theta} \Gamma_{\phi, \theta} C_\phi (Q_\theta Q_\phi)^\top B_\theta \\
    &= \frac{1}{M^2} \sum_{\theta, \phi \in \Theta} \Gamma_{\phi, \theta} (C_\phi Q_\phi^\top) (Q_\theta^\top B_\theta).
\end{align*}
To help rewrite the above expression, let $M_C \in \CC^{\tilde{d} \times \tilde{d}|\Theta|}$ be a block matrix with $|\Theta|$ block columns indexed by $\phi \in \Theta$. The block column corresponding to $\phi$ will be $C_\phi Q_\phi^\top$.
Similarly define $M_B \in \CC^{\tilde{d}|\Theta| \times \tilde{d}}$ have row blocks indexed by $\theta \in \Theta$ and containing $Q_\theta^\top B_\theta$.
Finally, let $M_\Gamma \in \CC^{\tilde{d}|\Theta| \times \tilde{d}|\Theta|}$ be defined to be:
$$M_\Gamma := \Gamma \otimes \id_{\tilde{d}}.$$
In words, $M_\Gamma$ has block rows and columns indexed by $\phi$ and $\theta$ respectively. The corresponding $\tilde{d} \times \tilde{d}$ block is $\Gamma_{\phi, \theta}$ times the identity matrix.
In these terms, the outcome of the above computation can be rewritten as:
\begin{align*}
    S_BS_C &= \frac{1}{M^2} M_C M_\Gamma M_B \\
    &= \frac{1}{M^2} M_C (\Gamma \otimes \id_{\tilde{d}}) M_B \\
    \Rightarrow \norm{S_B S_C} &= \frac{1}{M^2} \norm{M_C (\Gamma \otimes \id_{\tilde{d}}) M_B} \\
    &\leq \frac{1}{M^2} \norm{M_C} \cdot \norm{\Gamma} \cdot \norm{M_B}.
\end{align*}
We bound each term of the RHS above as follows:
\begin{itemize}
    \item For any unit $\ket{\psi}$, the vector $M_B \ket{\psi}$ will be the concatenation of $Q_\theta^\top B_\theta \ket{\psi}$ for every $\theta \in \Theta$.
    By unitarity, each of these chunks has unit norm, so the total $\ell_2$ norm will always be exactly $\sqrt{|\Theta|} = \sqrt{M}$.
    Consequently, we have $\norm{M_B} = \sqrt{M}$.
    \item We may similarly obtain $\norm{M_C} = \sqrt{M}$.
\end{itemize}
Putting these together yields:
$$\norm{S_B S_C} \leq \frac{1}{M^2} \cdot M \cdot \norm{\Gamma} = \frac{\norm{\Gamma}}{M} = \epsilon,$$
as desired.
\end{proof}

In the next two corollaries, we build on Lemma~\ref{lem:arbitrary-word} using power series techniques to prove a ``regularized-disagreement-conditioned'' version of its conclusion.
See Section~\ref{sec:conditioner} for a discussion of this ``regularized conditioning'' idea.

\begin{corollary}\label{cor:Z-moments}
    For every integer $\ell \geq 0$, we have:
    \begin{equation}\label{eq:Z-moment-bound}
  \norm{S_B(\id - 2D)^\ell S_C}\leq\epsilon.
\end{equation}
\end{corollary}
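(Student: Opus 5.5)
Note that $\id - 2D = \frac1M\sum_\psi B_\psi C_\psi$. Expanding $(\id-2D)^\ell$ gives a uniform average over tuples $(\psi_1,\dots,\psi_\ell)\in\Theta^\ell$ of the words $B_{\psi_1}C_{\psi_1}\cdots B_{\psi_\ell}C_{\psi_\ell}$. Since every $B$ commutes with every $C$, each such word factors as $W^B_{\vec\psi}W^C_{\vec\psi}$, where $W^B_{\vec\psi}=B_{\psi_1}\cdots B_{\psi_\ell}$ and $W^C_{\vec\psi}=C_{\psi_1}\cdots C_{\psi_\ell}$. Both factors are unitaries, the first acting on Bob's register and the second on Charlie's. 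I would then rerun the proof of Lemma~\ref{lem:arbitrary-word} with these longer words in place of $B_\theta$ and $C_\phi$, rather than trying to use the lemma as a black box.

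Write
\[
S_B(\id-2D)^\ell S_C=\frac{1}{M^{\ell+2}}\sum_{\theta,\phi,\vec\psi} Q_\theta^\top B_\theta\, W^B_{\vec\psi}W^C_{\vec\psi}\, Q_\phi^\top C_\phi .
\]
The operator $Q_\phi^\top$ lives on Alice's register, so it commutes with every $B$ and $C$; likewise Charlie's operators $W^C_{\vec\psi}$ and $C_\phi$ commute with Bob's $B_\theta W^B_{\vec\psi}$. Moving the Charlie factors and $Q_\phi^\top$ to the left of $Q_\theta^\top$ turns each summand into
\[
W^C_{\vec\psi}C_\phi\,Q_\phi^\top Q_\theta^\top\, B_\theta W^B_{\vec\psi}.
\]
Exactly as in Lemma~\ref{lem:arbitrary-word}, we use $Q_\phi^\top Q_\theta^\top=(Q_\theta Q_\phi)^\top=\Gamma_{\phi,\theta}(Q_\phi Q_\theta)^\top=\Gamma_{\phi,\theta}Q_\theta^\top Q_\phi^\top$. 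Pulling $Q_\theta^\top$ out to the left, the summand becomes
\[
\Gamma_{\phi,\theta}\,(W^C_{\vec\psi}C_\phi Q_\phi^\top)(Q_\theta^\top B_\theta W^B_{\vec\psi}),
\]
up to where the commuting factors are placed.

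The main obstacle is that both the left and right factors now depend on the shared $\vec\psi$, so the sum is not a single bilinear form. I would handle this by fixing $\vec\psi$ first. For each fixed $\vec\psi$, the inner sum over $(\theta,\phi)$ equals $\frac{1}{M^2}M_C^{\vec\psi}(\Gamma\otimes\id)M_B^{\vec\psi}$. Here $M_C^{\vec\psi}$ has block columns $W^C_{\vec\psi}C_\phi Q_\phi^\top$ and $M_B^{\vec\psi}$ has block rows $Q_\theta^\top B_\theta W^B_{\vec\psi}$. All blocks are unitary, so the column-norm argument gives $\norm{M_C^{\vec\psi}},\norm{M_B^{\vec\psi}}\le\sqrt M$, and each inner sum has norm at most $\epsilon$.

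Averaging over the $M^\ell$ choices of $\vec\psi$ and applying the triangle inequality then gives $\norm{S_B(\id-2D)^\ell S_C}\le\epsilon$. The case $\ell=0$ is exactly Lemma~\ref{lem:arbitrary-word}. The only care needed is to check that no step commutes two Bob operators or two Charlie operators past each other. The word $W^B_{\vec\psi}$ stays to the right of $B_\theta$, in its original order, and $W^C_{\vec\psi}$ stays to the left of $C_\phi$, also in its original order. So the required commutations are only Bob-versus-Charlie, Alice-versus-receiver, and the $Q$-versus-$Q$ swap that costs $\Gamma$.
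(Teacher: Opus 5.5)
Your proof is correct and has the same structure as the paper's: expand $(\id-2D)^\ell$, factor each word as $W^B_{\vec\psi}W^C_{\vec\psi}$, apply the triangle inequality over $\vec\psi$, and bound each term by $\epsilon$. The paper goes faster by observing that each term equals $W^C_{\vec\psi}\,S_BS_C\,W^B_{\vec\psi}$, because $W^C_{\vec\psi}$ commutes with $S_B$ and $W^B_{\vec\psi}$ commutes with $S_C$, so Lemma~\ref{lem:arbitrary-word} applies as a black box. Your matrices $M_C^{\vec\psi}=W^C_{\vec\psi}M_C$ and $M_B^{\vec\psi}=M_BW^B_{\vec\psi}$ are exactly this conjugation written out, so the ``shared $\vec\psi$'' obstacle you anticipated does not actually arise.

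There is one bookkeeping slip. Free commutations only bring the summand to $W^C_{\vec\psi}C_\phi\,Q_\theta^\top Q_\phi^\top\,B_\theta W^B_{\vec\psi}$. Your displayed intermediate, with $Q_\phi^\top Q_\theta^\top$, is already off by a factor $\Gamma_{\phi,\theta}$. A single $\Gamma$-swap from the correct intermediate gives your final signed expression, which is correct.
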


\begin{proof}
We compute:
\begin{align*}
    (\id-2D)^\ell &= \frac{1}{M^\ell}\left(\sum_\theta B_\theta C_\theta\right)^\ell \\
    &= \frac{1}{M^\ell} \sum_{\theta_1, \ldots, \theta_\ell \in \Theta} B_{\theta_1} C_{\theta_1} \ldots B_{\theta_\ell} C_{\theta_\ell} \\
    &= \frac{1}{M^\ell} \sum_{\theta_1, \ldots, \theta_\ell \in \Theta} B_{\theta_1} \ldots B_{\theta_\ell} C_{\theta_1} \ldots C_{\theta_\ell} \\
    \Rightarrow S_B(\id-2D)^{\ell} S_C &= \frac{1}{M^\ell} \sum_{\theta_1, \ldots, \theta_\ell \in \Theta} S_B B_{\theta_1} \ldots B_{\theta_\ell} C_{\theta_1} \ldots C_{\theta_\ell} S_C \\
    \Rightarrow \norm{S_B (\id-2D)^{\ell} S_C} &\leq \frac{1}{M^\ell} \sum_{\theta_1, \ldots, \theta_\ell \in \Theta} \norm{S_B B_{\theta_1} \ldots B_{\theta_\ell} C_{\theta_1} \ldots C_{\theta_\ell} S_C} \quad\text{(triangle inequality)} \\
    &= \frac{1}{M^\ell} \sum_{\theta_1, \ldots, \theta_\ell \in \Theta} \norm{C_{\theta_1} \ldots C_{\theta_\ell} S_B S_C B_{\theta_1} \ldots B_{\theta_\ell}} \quad\text{(by commutation)}\\
    &= \frac{1}{M^\ell} \sum_{\theta_1, \ldots, \theta_\ell \in \Theta} \norm{S_B S_C} \quad\text{(unitaries preserve spectral norm)}\\
    &\leq \epsilon\quad\text{(Lemma~\ref{lem:arbitrary-word},)}
\end{align*}
as claimed.
\end{proof}

\begin{corollary}[Bounding a ``disagreement-regularized'' $S_BS_C$]\label{cor:cross-resolvent}
For any $\delta > 0$, we have:
\begin{equation}\label{eq:cross-resolvent}
  \norm{S_B (D+\delta \id)^{-1} S_C}\leq \epsilon/\delta.
\end{equation}
\end{corollary}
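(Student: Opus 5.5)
We plan to expand the resolvent $(D+\delta\id)^{-1}$ as a power series in $Z := \id - 2D$ and apply Corollary~\ref{cor:Z-moments} term by term. Write $D = \frac{1}{2}(\id - Z)$, where $Z = \frac{1}{M}\sum_\theta B_\theta C_\theta$ is Hermitian. Since $0\preceq D\preceq\id$ (Lemma~\ref{lem:game-decomposition}), we have $-\id\preceq Z\preceq\id$. Then
\[
  D+\delta\id = \tfrac{1}{2}\bigl((1+2\delta)\id - Z\bigr) = \tfrac{1+2\delta}{2}\Bigl(\id - \tfrac{Z}{1+2\delta}\Bigr).
\]
Because $\norm{Z/(1+2\delta)}\leq 1/(1+2\delta)<1$, the Neumann series converges in operator norm, giving
\[
  (D+\delta\id)^{-1} = \frac{2}{1+2\delta}\sum_{\ell\geq 0}\frac{Z^\ell}{(1+2\delta)^\ell}.
\]

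Next we sandwich this series between $S_B$ and $S_C$. Multiplication by the bounded operators $S_B$ on the left and $S_C$ on the right preserves norm convergence. The triangle inequality for the convergent sum and Corollary~\ref{cor:Z-moments} (which gives $\norm{S_B Z^\ell S_C}\leq\epsilon$ for every $\ell\geq 0$) then yield
\[
  \norm{S_B(D+\delta\id)^{-1}S_C} \leq \frac{2}{1+2\delta}\sum_{\ell\geq0}\frac{\epsilon}{(1+2\delta)^\ell} = \frac{2\epsilon}{1+2\delta}\cdot\frac{1+2\delta}{2\delta} = \frac{\epsilon}{\delta}.
\]
This is exactly the claimed bound.

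The only point requiring care is the convergence of the series and the use of the uniform moment bound. Convergence comes from the strict spectral gap $\norm{Z}\leq 1 < 1+2\delta$, which depends on $D\succeq 0$, i.e.\ $Z\preceq\id$, together with $D\preceq\id$, i.e.\ $Z\succeq-\id$. Since $Z$ is a Hermitian average of the unitaries $B_\theta C_\theta$, its norm is at most $1$ in any case. Everything else follows directly from the earlier results, so we do not expect a real obstacle here. The substantive work has already been done in Corollary~\ref{cor:Z-moments}, which holds uniformly in $\ell$ only because the Bob and Charlie factors in $Z^\ell$ can be separated using the commutation relation~\eqref{eq:cross-commutation}.
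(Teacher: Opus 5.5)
Your proposal is correct and follows the paper's proof essentially verbatim: you rewrite $D+\delta\id = \frac{1+2\delta}{2}\bigl(\id - \frac{\id-2D}{1+2\delta}\bigr)$, expand the inverse as a convergent Neumann series using $\norm{\id-2D}\leq 1$, and bound each term by Corollary~\ref{cor:Z-moments} via the triangle inequality. Summing the geometric series gives $\epsilon/\delta$, as in the paper.
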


\begin{proof}
First, note that since $D \succeq 0$ and $\delta > 0$, the inverse $(D+\delta \id)^{-1}$ is well-defined, Hermitian, and positive-definite.
The idea is to expand out $(D + \delta \id)^{-1}$ as a power series.
To this end:
\begin{align*}
    D + \delta  \id &= \frac{\id - (\id-2D)}{2} + \delta  \id \\
    &= \frac{1}{2} ((1+2\delta) \id - (\id-2D)) \\
    &= \frac{1+2\delta}{2} \left(\id - \frac{\id-2D}{1+2\delta}\right).
\end{align*}
Remembering that $\id-2D$ is an average of Hermitian reflections $B_\theta C_\theta$, it follows that $\norm{\id-2D} \leq 1 \Rightarrow \norm{(\id-2D)/(1+2\delta)} < 1$.
Therefore we can indeed compute the inverse with a power series as follows:
\begin{align*}
    (D+\delta \id)^{-1} &= \frac{2}{1+2\delta} \left(\id - \frac{\id-2D}{1+2\delta}\right)^{-1} \\
    &= \frac{2}{1+2\delta} \sum_{\ell = 0}^\infty \frac{(\id-2D)^\ell}{(1+2\delta)^\ell} \\
    \Rightarrow \norm{S_B (D+\delta \id)^{-1} S_C} &\leq \frac{2}{1+2\delta} \sum_{\ell = 0}^\infty \frac{\norm{S_B (\id-2D)^\ell S_C}}{(1+2\delta)^\ell} \\
    &\leq \frac{2}{1+2\delta} \sum_{\ell = 0}^\infty \frac{\epsilon}{(1+2\delta)^\ell} \text{ (Corollary~\ref{cor:Z-moments})} \\
    &= \frac{2\epsilon}{1+2\delta} \cdot \frac{1}{1 - \frac{1}{1+2\delta}} \\
    &= \epsilon/\delta.
\end{align*}
\end{proof}

\subsection{Completing the Proof}

The below proof makes use of Theorem~\ref{thm:separator}, an isolated linear-algebraic lemma that we defer to Section~\ref{sec:separator}.

\begin{proof}[Proof of Theorem~\ref{thm:security}]
Lemma~\ref{lem:game-decomposition} gives $D \succeq \pm\frac{1}{2}(S_B-S_C)$, and
Corollary~\ref{cor:cross-resolvent} with $\delta = \sqrt{\epsilon}$ gives
\[
  \norm{S_B(D+\delta \id)^{-1}S_C}\leq \delta.
\]
Applying Theorem~\ref{thm:separator} with $\gamma = \delta$ yields:
\[
  \frac{S_B+S_C}{2}\preceq D+\delta\id,
  \qquad\text{or equivalently}\qquad
  \frac{S_B+S_C}{2} - D\preceq \delta\id.
\]
The exact game decomposition in Lemma~\ref{lem:game-decomposition} therefore implies
\begin{equation}\label{eq:Xi-final}
  \Xi\preceq
  \left(\frac12+\frac{\delta}{2}\right)\id.
\end{equation}
Now applying Lemma~\ref{lem:winning-operator} and recalling that $J(\Phi)$ is a density operator tells us that:
\[
  \omega = \Tr{\Xi J(\Phi)}
  \leq\frac12+\frac{\delta}{2} = \frac12+\frac{1}{2} \sqrt{\frac{\norm{\Gamma}}{|\Theta|}},
\]
establishing the theorem.
\end{proof}

\begin{proof}[Proof of Main Theorem~\ref{thm:main-intro}]
  We instantiate the construction of Section~\ref{sec:construction} with the Pauli instantiation detailed therein.
  Correctness and efficiency have already been addressed.
  Finally, we invoke the security bound of Theorem~\ref{thm:security} and use Lemma~\ref{lem:symplectic-matrix} to deduce that $\norm{\Gamma^{\mathrm{Pauli}}} \leq 2^n$.
  Plugging this in implies the conclusion.
\end{proof}

\section{Key Linear-Algebraic Lemma}\label{sec:separator}

We isolate here the technical operator-theoretic workhorse (Theorem~\ref{thm:separator}) of the security proof; see Section~\ref{sec:separatoroverview} for some motivation and Section~\ref{sec:fullsecurity} for details on how this appears in the security proof.
Before that, we have a simple preliminary lemma. This is a well-known application of the Schur complement technique, but we include a proof for completeness.
\begin{lemma}[PSD $2 \times 2$ block matrices]\label{lem:2x2}
    Let $A, B, C$ be square Hermitian operators of the same dimension. Then:
    \begin{enumerate}
        \item\label{item:schurblocks} If $C$ is positive-definite, then $\begin{bmatrix} A & B \\ B & C \end{bmatrix}$ is PSD if and only if $A - BC^{-1}B$ is PSD.
        \item\label{item:hadamardblocks} The block matrix $\begin{bmatrix} A & B \\ B & A \end{bmatrix}$ is PSD if and only if $A-B$ and $A+B$ are both PSD.
    \end{enumerate}
\end{lemma}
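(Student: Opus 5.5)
For item~\ref{item:schurblocks}, I would use a block congruence. Let
\[
  L=\begin{bmatrix} \id & -BC^{-1} \\ 0 & \id \end{bmatrix},
\]
which is invertible because it is block upper triangular with identity diagonal blocks. Using that $B$ and $C^{-1}$ are Hermitian, I would check directly that
\[
  L\begin{bmatrix} A & B \\ B & C \end{bmatrix}L^*=\begin{bmatrix} A-BC^{-1}B & 0 \\ 0 & C \end{bmatrix}.
\]
Congruence by an invertible operator preserves positive semidefiniteness in both directions, since $X\succeq 0$ if and only if $LXL^*\succeq 0$. A block-diagonal operator is PSD exactly when each diagonal block is PSD. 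We are given $C\succ 0$, so the block matrix is PSD if and only if $A-BC^{-1}B\succeq 0$.

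For item~\ref{item:hadamardblocks}, I would conjugate by the unitary
\[
  U=\frac{1}{\sqrt2}\begin{bmatrix} \id & \id \\ \id & -\id \end{bmatrix},
\]
which is a Hadamard matrix tensored with the identity. A direct computation gives
\[
  U\begin{bmatrix} A & B \\ B & A \end{bmatrix}U^*=\begin{bmatrix} A+B & 0 \\ 0 & A-B \end{bmatrix}.
\]
Unitary conjugation preserves the spectrum, so the block matrix is PSD if and only if both $A+B$ and $A-B$ are PSD.

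Neither step should be a real obstacle. The only care needed is in item~\ref{item:schurblocks}: the off-diagonal blocks must be $B$ and $B^*=B$ so that the block matrix is Hermitian and $L^*$ cancels the lower-left block. I would also state explicitly that positive-definiteness of $C$ is what allows forming $C^{-1}$ and makes the lower-right block positive automatically.
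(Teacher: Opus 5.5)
Your proposal is correct and follows the paper's proof. In Item~1 your $L$ equals $(T^*)^{-1}$ for the paper's $T=\begin{bmatrix} \id & 0 \\ C^{-1}B & \id \end{bmatrix}$, so it is the same Schur-complement congruence written in the opposite direction; in Item~2 you conjugate by the same block Hadamard unitary to reach the block-diagonal form with blocks $A+B$ and $A-B$.
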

\begin{proof}
    To prove Item~\ref{item:schurblocks}, 
    note that if we let $T = \begin{bmatrix} \id & 0 \\ C^{-1}B & \id \end{bmatrix}$, we have:
    \begin{align*}
        T^* \begin{bmatrix} A-BC^{-1}B & 0 \\ 0 & C \end{bmatrix} T &= \begin{bmatrix} \id & BC^{-1} \\ 0 & \id \end{bmatrix} \begin{bmatrix} A-BC^{-1}B & 0 \\ 0 & C \end{bmatrix}  \begin{bmatrix} \id & 0 \\ C^{-1}B & \id \end{bmatrix}\\
        &= \begin{bmatrix} \id & BC^{-1} \\ 0 & \id \end{bmatrix} \begin{bmatrix} A-BC^{-1}B & 0 \\ B & C \end{bmatrix} \\
        &= \begin{bmatrix} A & B \\ B & C \end{bmatrix}.
    \end{align*}
    Since $T$ is invertible, it follows that $\begin{bmatrix} A & B \\ B & C \end{bmatrix}$ being PSD is equivalent to $\begin{bmatrix} A-BC^{-1}B & 0 \\ 0 & C \end{bmatrix}$ being PSD, which is in turn equivalent to $A-BC^{-1}B$ and $C$ both being PSD.
    We are already assuming that $C$ is positive-definite, so this is in turn equivalent to $A-BC^{-1}B$ being PSD.
    
    Now we turn to Item~\ref{item:hadamardblocks}. Note that $U = \frac{1}{\sqrt{2}} \begin{bmatrix} \id & \id \\ \id & -\id \end{bmatrix}$ is invertible.
    Conjugating by $U$, we have:
    \begin{align*}
        U \begin{bmatrix} A & B \\ B & A \end{bmatrix} U^* &= \frac{1}{2} \begin{bmatrix} \id & \id \\ \id & -\id \end{bmatrix} \begin{bmatrix} A & B \\ B & A \end{bmatrix} \begin{bmatrix} \id & \id \\ \id & -\id \end{bmatrix} \\
        &= \frac{1}{2} \begin{bmatrix} \id & \id \\ \id & -\id \end{bmatrix} \begin{bmatrix} A+B & A-B \\ A+B & B-A \end{bmatrix} \\
        &= \begin{bmatrix} A+B & 0 \\ 0 & A-B \end{bmatrix}.
    \end{align*}
    So $\begin{bmatrix} A & B \\ B & A \end{bmatrix}$ is PSD if and only if $\begin{bmatrix} A+B & 0 \\ 0 & A-B \end{bmatrix}$ is PSD, which is in turn true if and only if $A+B$ and $A-B$ are both PSD.
\end{proof}

\begin{theorem}[Cloning advantage is bounded by ``disagreement-regularized'' $S_BS_C$]\label{thm:separator}
Let $S_B, S_C, D$ be self-adjoint operators with $D \succeq \frac{S_B-S_C}{2}, \frac{S_C-S_B}{2}$.
For reals $\delta, \gamma > 0$, assume the inequality:
\begin{equation}\label{eq:separator-hyp}
  \norm{S_B(D+\delta \id)^{-1}S_C}\leq\gamma.
\end{equation}
Then we have:
\begin{equation}\label{eq:separator-conclusion}
  \frac{S_B+S_C}{2} - D \preceq \max(\delta, \gamma) \cdot \id.
\end{equation}
\end{theorem}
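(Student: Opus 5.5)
The plan is to reduce everything to a single operator inequality, $Y E^{-1} Y \preceq F E^{-1} F$, and then take square roots. Throughout, set $X = \frac{S_B-S_C}{2}$, $Y = \frac{S_B+S_C}{2}$, $E = D+\delta\id$ and $R = E^{-1}$, and write $\eta = \max(\delta,\gamma)$ and $F = D+\eta\id$. The hypotheses give $D \succeq \pm X$, hence $D \succeq 0$, so $E \succ 0$ and $F \succeq E \succ 0$. The goal is $Y \preceq F$.

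\emph{Step 1 (algebraic identity).} Expanding both quadratic forms in $R$ gives
\[
YRY = XRX + \tfrac12\bigl(S_BRS_C + S_CRS_B\bigr).
\]
Here $S_BRS_C + S_CRS_B$ is Hermitian, being $T+T^*$ for $T = S_BRS_C$. By the hypothesis \eqref{eq:separator-hyp} it has norm at most $2\gamma$, so $YRY \preceq XRX + \gamma\id$.

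\emph{Step 2 (controlling $XRX$ using $D \succeq \pm X$).} By Lemma~\ref{lem:2x2}, Item~\ref{item:hadamardblocks}, the block matrix $\begin{bmatrix} E & X \\ X & E\end{bmatrix}$ is PSD, since $E \pm X \succeq \delta\id$. Item~\ref{item:schurblocks} then yields $XRX \preceq E$, hence $YRY \preceq E + \gamma\id$.

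This crude bound is not sharp enough. In the scalar (commuting) model it only gives $y \le d + \delta + \gamma/2$, whereas the scalar calculation
\[
y^2 = x^2 + s_bs_c \le d^2 + \gamma(d+\delta) \le (d+\eta)^2,
\]
using $d \ge 0$, $\gamma \le \eta$ and $\gamma\delta \le \eta^2$, shows that $\max(\delta,\gamma)$ is the right answer. So I will instead aim for the sharper bound $XRX \preceq DRD$, which is the operator analogue of $x^2 \le d^2$. I would try to prove it by a Schur-complement certificate: show that $\begin{bmatrix} DRD & X \\ X & E \end{bmatrix} \succeq 0$ by writing it as a congruence of $\begin{bmatrix} D & X \\ X & D\end{bmatrix} \succeq 0$ plus a PSD correction involving $\delta$. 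Together with the identity
\[
DRD = E - 2\delta\id + \delta^2 R,
\]
this gives
\[
YRY \preceq E - 2\delta\id + \delta^2 R + \gamma\id .
\]
I expect this step to be the main obstacle. If $XRX \preceq DRD$ fails as an operator inequality, I will look for a different $2\times 2$ block combination, in the spirit of the paper's "matrix-valued rational sum-of-squares certificate", that still produces a bound of the form $YRY \preceq E + c_1\id + c_2 R$ with constants good enough for Step 3.

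\emph{Step 3 (comparison with $FRF$).} Write $F = E + (\eta-\delta)\id$. Then
\[
FRF = E + 2(\eta-\delta)\id + (\eta-\delta)^2 R .
\]
So it suffices to verify
\[
(\gamma - 2\delta)\id + \delta^2 R \preceq 2(\eta-\delta)\id + (\eta-\delta)^2 R .
\]
Since $0 \prec R \preceq \delta^{-1}\id$, this reduces to a scalar check on the spectrum of $R$, which I expect to be routine; it is the operator version of $\gamma d + \gamma\delta \le 2\eta d + \eta^2$. The case split is $\gamma \ge \delta$ (so $\eta = \gamma$) versus $\gamma < \delta$ (so $\eta = \delta$).

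\emph{Step 4 (taking square roots).} Conjugate by $E^{-1/2}$, setting $Z = E^{-1/2} Y E^{-1/2}$ and $W = E^{-1/2} F E^{-1/2} \succeq 0$. Conjugating $YRY \preceq FRF$ gives exactly $Z^2 \preceq W^2$. Operator monotonicity of the square root gives $Z \preceq |Z| \preceq W$, and conjugating back by $E^{1/2}$ gives $Y \preceq F$. This is \eqref{eq:separator-conclusion}.
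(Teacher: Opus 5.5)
Your Steps 1, 3 and 4 are fine. Step 4, which uses operator monotonicity of the square root, is a valid substitute for the paper's Schur-complement finish. The gap is in Step 2: the inequality you aim for there, $XRX \preceq DRD$, is false in general, so no certificate can prove it. The scalar heuristic $x^2 \le d^2$ does not lift because squaring is not operator monotone. Indeed, $\delta R \to \id$ as $\delta \to \infty$, so $\delta(DRD - XRX) \to D^2 - X^2$, and $-D \preceq X \preceq D$ does not imply $X^2 \preceq D^2$. Concretely, take $\delta = 1$ and
\[
D = \begin{bmatrix} 2 & 1 \\ 1 & 1 \end{bmatrix}, \qquad X = \begin{bmatrix} 1 & 0 \\ 0 & 0 \end{bmatrix}.
\]
Then $D - X \succeq 0$ and $D + X \succ 0$. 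Yet $R = \frac{1}{5}\begin{bmatrix} 2 & -1 \\ -1 & 3 \end{bmatrix}$ and $DRD - XRX = \frac{1}{5}\begin{bmatrix} 5 & 4 \\ 4 & 3 \end{bmatrix}$, which has negative determinant. Your fallback (``a different block combination'') is left unspecified, so as written Step 2 does not go through.

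The missing observation is to add $\delta$ to only one diagonal block,
\[
\begin{bmatrix} D & X \\ X & E \end{bmatrix} = \begin{bmatrix} D & X \\ X & D \end{bmatrix} + \begin{bmatrix} 0 & 0 \\ 0 & \delta\id \end{bmatrix} \succeq 0,
\]
and take the Schur complement with respect to the invertible block $E$. This yields $XRX \preceq D$, the operator form of $x^2 \le d(d+\delta)$ rather than $x^2 \le d^2$. It lies between your false bound and your crude one ($DRD \preceq D \preceq E$), and it is exactly enough:
\begin{itemize}
\item Combined with Step 1, it gives $YRY \preceq D + \gamma\id \preceq F$.
\item Since $\eta \ge \delta$ gives $F^{-1} \preceq R$, you get $YF^{-1}Y \preceq F$.
\item That is, $\norm{F^{-1/2} Y F^{-1/2}} \le 1$, hence $Y \preceq F$.
\end{itemize}
Your Steps 3 and 4 also work with this bound: it reads $YRY \preceq E + (\gamma-\delta)\id$, and $\gamma - \delta \le 2(\eta-\delta)$. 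This is exactly the paper's proof, which finishes via Lemma~\ref{lem:2x2} rather than square roots.
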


\begin{proof}
    The two positivity assumptions imply $D\succeq0$, so $(D+\delta \id)^{-1}$ is well-defined and positive definite.
    Let
    \[
      K=S_B(D+\delta \id)^{-1}S_C.
    \]
    We know that $\norm{K} \leq \gamma$ and therefore $\norm{K^*} \leq \gamma$ also.
    It follows that we have:
    \begin{align*}
        \gamma &\geq \frac{1}{2} \norm{K+K^*} \\
        &= \frac{1}{2} \norm{S_B(D+\delta \id)^{-1}S_C + S_C(D+\delta \id)^{-1}S_B} \\
        &= \frac{1}{4} \norm{(S_B+S_C)(D+\delta \id)^{-1}(S_B+S_C) - (S_B-S_C)(D+\delta \id)^{-1}(S_B-S_C)},
    \end{align*}
    implying:
    \begin{equation}\label{eq:YFY-TFT}
      (S_B+S_C)(D+\delta \id)^{-1}(S_B+S_C)\preceq (S_B-S_C)(D+\delta \id)^{-1}(S_B-S_C)+4\gamma\id.
    \end{equation}
    Next, Item~\ref{item:hadamardblocks} of Lemma~\ref{lem:2x2} tells us that $$\begin{bmatrix} D & \frac{1}{2}(S_B-S_C) \\ \frac{1}{2}(S_B-S_C) & D \end{bmatrix}$$
    is PSD.
    Consequently, we also have:
    $$\begin{bmatrix} D & \frac{1}{2}(S_B-S_C) \\ \frac{1}{2}(S_B-S_C) & D+\delta \id \end{bmatrix} = \begin{bmatrix} D & \frac{1}{2}(S_B-S_C) \\ \frac{1}{2}(S_B-S_C) & D \end{bmatrix} + \begin{bmatrix} 0 & 0 \\ 0 & \delta \id \end{bmatrix} \succeq 0.$$
    Applying Item~\ref{item:schurblocks} of Lemma~\ref{lem:2x2} to this matrix implies that:
    \begin{align*}
        D-\frac{1}{4}(S_B-S_C)(D+\delta \id)^{-1}(S_B-S_C) &\succeq 0 \\ \Rightarrow D+\max(\delta, \gamma) \cdot \id &\succeq \frac{1}{4}(S_B-S_C)(D+\delta \id)^{-1}(S_B-S_C) + \max(\delta, \gamma) \cdot \id \\
        &\succeq \frac{1}{4}(S_B+S_C)(D+\delta \id)^{-1} (S_B+S_C) \text{ (plugging in~\eqref{eq:YFY-TFT})} \\
        &\succeq \frac{1}{4} (S_B+S_C)(D + \max(\delta, \gamma) \cdot \id)^{-1} (S_B+S_C).
    \end{align*}
    The last step above uses the fact that $D+\delta \id \preceq D+\max(\delta, \gamma) \cdot \id$ and both these operators are positive definite, so $(D+\max(\delta, \gamma) \cdot \id)^{-1} \preceq (D+\delta \id)^{-1}$.
    Now, Item~\ref{item:schurblocks} of Lemma~\ref{lem:2x2} tells us that
    $$\begin{bmatrix} D+\max(\delta, \gamma) \cdot \id & \frac{1}{2}(S_B+S_C) \\ \frac{1}{2}(S_B+S_C) & D+\max(\delta, \gamma) \cdot \id \end{bmatrix}$$
    is PSD.
    Finally, Item~\ref{item:hadamardblocks} of Lemma~\ref{lem:2x2} now tells us that $D+\max(\delta, \gamma) \cdot \id \succeq \frac{1}{2}(S_B+S_C)$, as desired.
\end{proof}

\section{An Alternate Proof for the Haar Scheme}\label{sec:haar}

Here, we apply the techniques by~\cite{MajenzSchaffnerTahmasbi2021,broadbent2025optimaluntelegraphableencryptionimplications,PorembaRagavanVaikuntanathan2026} to show that the Haar scheme proposed by~\cite{BhattacharyyaBroadbentCulf2026}, though inefficient, also achieves security $1/2 + O(2^{-n/2})$.
We prove this here in the interest of self-containment; we do not make any claim to originality in this section.
The original proof by~\cite{BhattacharyyaBroadbentCulf2026} showed security $1/2+O(2^{-n/8})$.

\begin{construction}\label{con:haar}
Fix $n \geq 1$ and put $d = 2^n$.
Let $\mathfrak{U}$ be any compact subgroup of the $d$-dimensional unitary group.
\begin{itemize}
  \item Key generation: sample $U$ from the Haar measure on $\mathfrak{U}$.
  \item To encrypt $a \in \{0, 1\}$, output
    \begin{equation}
        \rho_{U, a} = U\left(\proj{a} \otimes \frac{2}{d} \id_{d/2}\right)U^*.
    \end{equation}
    \item To decrypt, apply $U^*$ and measure the first qubit in the standard basis.
\end{itemize}
\end{construction}

To compare this with our scheme, note that our Pauli-based scheme in Section~\ref{sec:construction} can alternately be formulated in terms closer to Construction~\ref{con:haar}.

\begin{lemma}
    Assume Construction~\ref{con:pauli} is instantiated with the $4^n-1$ non-identity Paulis.
    Then for every key $\theta \in \Theta_n$, there exists an $n$-qubit Clifford $U_\theta$ such that the encryption/decryption operations in Construction~\ref{con:haar} for $U_\theta$ are identical to the encryption/decryption operations in Construction~\ref{con:pauli} for the key $\theta$.
\end{lemma}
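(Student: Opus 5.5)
The plan is to find a Clifford $U_\theta$ that conjugates $Z_1 = Z\otimes I^{\otimes(n-1)}$ to the key Pauli, namely $U_\theta Z_1 U_\theta^* = P_\theta$. Granting this, both operations match by direct comparison, as follows.

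\emph{Encryption.} The Haar-style ciphertext can be rewritten as
\[
\proj{a}\otimes \tfrac{2}{d}\id_{d/2} = \tfrac{1}{d}\bigl(\id + (-1)^a Z_1\bigr),
\]
since $\tfrac12(\id+(-1)^aZ)=\proj{a}$ on the first qubit. Conjugating by $U_\theta$ then gives
\[
\rho_{U_\theta,a} = \tfrac1d\bigl(\id+(-1)^a U_\theta Z_1 U_\theta^*\bigr) = \tfrac1d\bigl(\id+(-1)^aP_\theta\bigr),
\]
which is exactly \eqref{eq:encryption-state}.

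\emph{Decryption.} Applying $U_\theta^*$ and measuring the first qubit in the standard basis is the measurement with projectors $U_\theta\,\tfrac12(\id\pm Z_1)\,U_\theta^* = \tfrac12(\id\pm P_\theta)$. This is measuring the reflection $P_\theta$, so it outputs the same bit with identical outcome statistics on every input. Since the two constructions are given only up to their action as channels and instruments, I will note that ``identical operations'' means this equality of the ciphertext states and of the POVMs.

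\emph{Existence of $U_\theta$.} This is the only nontrivial step. The Clifford group acts transitively on the non-identity Hermitian Paulis: by the standard symplectic fact, $\mathrm{Sp}(2n,\F_2)$ acts transitively on $V_n\setminus\{0\}$, and every symplectic map lifts to a Clifford. The lift may introduce a sign, giving $U Z_1 U^* = \pm P_\theta$; if the sign is $-$, I will replace $U$ by $UX_1$, which flips it.

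To avoid citing this, I will instead give an explicit construction.
\begin{enumerate}
\item On each qubit $j$ with $\theta_j\in\{X,Y\}$, apply a single-qubit Clifford, $H$ or $SH$ (with signs fixed as needed), mapping $Z$ to that local Pauli. This reduces the target to a $Z$-string on a support set $S$.
\item Use a cascade of CNOTs to map $Z_{i_0}$ to $\prod_{j\in S}Z_j$, where $i_0\in S$. Here I use that conjugation of $Z$ on the target by a CNOT produces $Z\otimes Z$.
\item Apply a SWAP to move $i_0$ to qubit $1$.
\end{enumerate}
The main care needed is tracking the order of conjugations and the Hermitian phase convention. This bookkeeping is routine, and any leftover global sign is fixed by an extra $X_1$.
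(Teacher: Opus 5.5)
Your proof is correct and takes the same route as the paper's: rewrite the Haar-style ciphertext as $\frac1d(\id+(-1)^aZ_1)$, conjugate by a Clifford $U_\theta$ with $U_\theta Z_1U_\theta^*=P_\theta$, and use the transitivity of the Clifford group on non-identity Hermitian Paulis. The paper only cites that transitivity and leaves the decryption check implicit, so your explicit POVM comparison and your $H$/$SH$, CNOT-cascade, SWAP construction fill in those details; the construction needs a nonempty support $S$, which holds exactly because $\theta\neq I^{\ot n}$.
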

\begin{proof}
    The observation is that for any unitary $U$, we have:
    \begin{align*}
        U\left(\proj{a} \otimes \frac{2}{d} \id_{d/2}\right)U^* &= U\left(\frac{\id + (-1)^a Z}{2} \otimes \frac{2}{d} \id_{d/2}\right)U^* \\
        &= U\left(\frac{1}{d} \id_d + \frac{1}{d} (-1)^a (Z \otimes \id_{d/2})\right)U^* \\
        &= \frac{1}{d} \left(\id_d + (-1)^a U(Z \otimes \id_{d/2})U^*\right).
    \end{align*}
    Since $n$-qubit Clifford group acts transitively on the non-identity Hermitian Paulis, there exists an $n$-qubit Clifford $U_\theta$ such that $U_\theta(Z \otimes \id_{d/2})U_\theta^* = P_\theta$.
    The conclusion follows.
\end{proof}

\begin{theorem}\label{thm:haarmain}
    Let $\mathfrak{U}$ be any compact subgroup of the $d$-dimensional unitary group that contains the $n$-qubit Clifford group.
    Then Construction~\ref{con:haar} instantiated with $\mathfrak{U}$ is also secure; any adversaries achieve joint cloning success probability at most
    $$\frac{1}{2} + \frac{1}{2} \sqrt{\frac{2^n}{4^n-1}}.$$
    (In particular, this holds when $\mathfrak{U}$ is the entire unitary group.)
\end{theorem}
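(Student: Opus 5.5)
The plan is to reduce the Haar scheme over $\mathfrak{U}$ to the Pauli scheme by a twirling argument. The key is the invariance of the Haar measure under the Clifford subgroup $\mathcal{C}_n \subseteq \mathfrak{U}$. Fix an arbitrary attack: a splitting channel $\Phi$ and key-dependent binary POVMs $\{D^B_{U,a}\}$, $\{D^C_{U,a}\}$ (measurable in $U$). The success probability is
\[
\omega=\mathbb{E}_{U\sim \mathfrak{U}}\ \frac12\sum_a \Tr{(D^B_{U,a}\ot D^C_{U,a})\Phi(\rho_{U,a})}.
\]
By right-invariance of Haar measure, for each fixed $V \in \mathcal{C}_n$ the variable $UV$ has the same distribution as $U$. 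So we may write $\omega=\mathbb{E}_{U}\,\mathbb{E}_{V\in\mathcal{C}_n}[\cdots]$ with key $UV$.

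Next, condition on $U$. Using the computation in the preceding lemma, $\rho_{UV,a}=\frac1d(\id+(-1)^a UVZ_1V^*U^*)$, where $Z_1=Z\ot\id_{d/2}$. For uniform $V\in\mathcal{C}_n$, the Pauli $VZ_1V^*$ is, up to sign, uniformly distributed over the $4^n-1$ non-identity Paulis. This uses transitivity of the Clifford action on signed non-identity Paulis, together with the fact that all orbit fibres have equal size because the action is a group action. The sign is harmless, since $-P_\theta$ is also a Clifford image and we can symmetrize over signs. Alternatively, we can absorb the sign by relabeling the bit in the receivers' POVMs, which is allowed because they know the key $UV$.

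Then, for fixed $U$, I will build a Pauli-scheme attack. The cloner applies $\Phi\circ\mathrm{Ad}_{U}$ to the Pauli ciphertext $\frac1d(\id+(-1)^aP_\theta)$. When key $\theta$ is revealed, Bob and Charlie average over the Cliffords $V$ with $VZ_1V^*=\pm P_\theta$. Because they share no randomness after splitting, averaging is done differently: we write the conditional success as a convex combination over a choice of representative $V_\theta$ for each $\theta$ (and sign). Each such choice is a legitimate Pauli-scheme attack with POVMs $D^B_{UV_\theta,a\oplus s_\theta}$. So the conditional success probability is an average of success probabilities of valid attacks on Construction~\ref{con:pauli} with all $4^n-1$ Paulis. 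Formally, sample $V$ uniformly, which induces $\theta(V)$ uniform, and decompose the expectation over $V$ as $\mathbb{E}_\theta \mathbb{E}_{V\mid\theta}$. The interchange into ``choose a representative function $\theta\mapsto V_\theta$ with independent uniform fibres'' is exact, because the objective is linear and each term depends only on one fibre element.

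By Main Theorem~\ref{thm:main-intro}, each such attack succeeds with probability at most $\frac12+\frac12\sqrt{2^n/(4^n-1)}$. Averaging over representative choices and over $U$ preserves this bound.

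The main obstacle I expect is bookkeeping rather than mathematics. The first issue is that the attack must be key-independent before the key is revealed, which holds since $\Phi\circ\mathrm{Ad}_U$ depends only on the fixed $U$. The second is verifying that $\theta(V)$ is exactly uniform with consistent sign handling. I would handle signs by noting that the stabilizer argument gives equal fibre sizes over the $2(4^n-1)$ signed Paulis, so both signs of each Pauli occur equally often.
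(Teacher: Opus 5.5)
Your proposal is correct and follows essentially the same reduction as the paper. For each fixed $U$, the cloner conjugates by $U$ and runs the Haar cloner, the receivers use the key $U$ times a Clifford sending $Z\otimes\id_{d/2}$ to $\pm P_\theta$, Main Theorem~\ref{thm:main-intro} bounds each such attack, and right-invariance of the Haar measure on $\mathfrak{U}$ under Clifford elements finishes the argument. The paper is simpler on bookkeeping: it fixes one Clifford $U_\theta$ per key with $U_\theta(Z\otimes\id_{d/2})U_\theta^*=P_\theta$ exactly and uses $UU_\theta\sim\mathrm{Haar}(\mathfrak{U})$ for each fixed $\theta$, so it needs no twirl over the whole Clifford group, no uniformity-of-the-label argument, and no sign relabeling.
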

\begin{proof}
    Consider any adversary $(\mathsf{Cl}, \mathsf{B}, \mathsf{C})$ for Construction~\ref{con:haar} instantiated with $\mathfrak{U}$, and assume that Bob and Charlie simultaneously guess correctly with probability $\omega$.
    From these, we build an adversary $(\mathsf{Cl}', \mathsf{B}', \mathsf{C}')$, for Construction~\ref{con:pauli} instantiated with the Paulis.
    $(\mathsf{Cl}', \mathsf{B}', \mathsf{C}')$ have some unitary $U$ hardcoded, that we will address at the end.
    \begin{itemize}
        \item $\mathsf{Cl}'$: when they receive the state $\rho$, run $\mathsf{Cl}$ on the state $U\rho U^*$ and forward the resulting bipartite state to $\mathsf{B}'$ and $\mathsf{C}'$.

        \item $\mathsf{B}'$: once the Pauli key $\theta$ is revealed, reveal $UU_\theta$ as the key to $\mathsf{B}$ and output the guess produced by $\mathsf{B}$.

        \item $\mathsf{C}'$: analogously for $\mathsf{C}$.
    \end{itemize}
    First, note that the view of $(\mathsf{Cl}, \mathsf{B}, \mathsf{C})$ is as if Construction~\ref{con:haar} has been instantiated with the key $UU_\theta$; this is clear for Bob and Charlie, and the state received by $\mathsf{Cl}$ is:
    $$U\rho U^* = U\left(U_\theta \left(\proj{a} \otimes \frac{2}{d} \id_{d/2}\right)U_\theta^*\right)U^* = UU_\theta \left(\proj{a} \otimes \frac{2}{d} \id_{d/2}\right)(UU_\theta)^*.$$
    Bearing this in mind, we have the following chain of implications:
    \begin{itemize}
        \item Main Theorem~\ref{thm:main-intro} tells us that $(\mathsf{Cl}', \mathsf{B}', \mathsf{C}')$ with $U$ hardcoded succeed with probability at most $\frac{1}{2} + \frac{1}{2} \sqrt{2^n/(4^n-1)}$.
        \item Equivalently, for every fixed $U$, $(\mathsf{Cl}, \mathsf{B}, \mathsf{C})$ succeed on the key ensemble $\{UU_\theta: \theta \gets \Theta_n\}$ with probability at most $\frac{1}{2} + \frac{1}{2} \sqrt{2^n/(4^n-1)}$.
        \item Averaging over $U$ now tells us that $(\mathsf{Cl}, \mathsf{B}, \mathsf{C})$ succeed on the key ensemble $\{UU_\theta: \theta \gets \Theta_n, U \sim \mathrm{Haar}(\mathfrak{U})\}$ with probability at most $\frac{1}{2} + \frac{1}{2} \sqrt{2^n/(4^n-1)}$.
        \item Finally, Haar invariance tells us that this is exactly equivalent to $\omega \leq \frac{1}{2} + \frac{1}{2} \sqrt{2^n/(4^n-1)}$, as desired.
    \end{itemize}
\end{proof}

\bibliographystyle{alpha}
\bibliography{main}

\end{document}